%% file: main.tex
\documentclass[11pt,a4paper]{article}

\usepackage[margin=1in]{geometry}
\usepackage[T1]{fontenc}
\usepackage{lmodern}
\usepackage{amsmath,amssymb,amsthm,mathtools}
\usepackage{microtype}
\usepackage{booktabs}
\usepackage{algorithm}
\usepackage{algpseudocode}
\usepackage{flafter}
\usepackage[numbers,sort&compress]{natbib}
\usepackage[hidelinks]{hyperref}
\usepackage[capitalise,nameinlink,noabbrev]{cleveref}
\makeatletter
\providecommand{\theHALG@line}{}
\renewcommand{\theHALG@line}{\thealgorithm.\arabic{ALG@line}}
\makeatother

\hypersetup{
  pdftitle={Randomized Online Fair Division: High-Probability and Expected Realized Fairness},
  pdfauthor={Tianqi Chen and Jingxiao Long},
  pdfsubject={High-probability and expected realized guarantees for randomized online goods allocation},
  pdfkeywords={online fair division, randomized algorithms, high-probability fairness,
    expected realized fairness, PROP1, EFX, EF1, non-adaptive adversary}
}
\title{Randomized Online Fair Division:\\High-Probability and Expected Realized Fairness}
\author{%
  Tianqi Chen\thanks{\fontsize{8.5}{10}\selectfont
    School of Mathematical Science, Zhejiang University,
    Hangzhou 310027, P. R. China. Email:
    \href{mailto:ctq@zju.edu.cn}{ctq@zju.edu.cn}.}
  \quad
  Jingxiao Long\thanks{\fontsize{8.5}{10}\selectfont
    School of Mathematical Science, Zhejiang University,
    Hangzhou 310027, P. R. China. Email:
    \href{mailto:jx_long@zju.edu.cn}{jx\_long@zju.edu.cn}.}%
}
\date{}

\newtheorem{theorem}{Theorem}[section]
\newtheorem{lemma}[theorem]{Lemma}
\newtheorem{proposition}[theorem]{Proposition}
\newtheorem{corollary}[theorem]{Corollary}
\theoremstyle{definition}

\theoremstyle{remark}

\newcommand{\Rand}{\textnormal{\textsc{Rand}}}
\newcommand{\OSQ}{\textnormal{\textsc{OSQ}}}
\newcommand{\E}{\mathbb E}
\newcommand{\Prb}{\mathbb P}
\newcommand{\ind}{\mathbf 1}

\begin{document}
\maketitle
\begin{abstract}
We study randomized algorithms for the fully online allocation of
indivisible goods among \(n\ge2\) agents with nonnegative additive
valuations. Goods arrive sequentially and must be allocated immediately
and irrevocably, with only \(n\) known in advance. Since exact ex-ante
envy freeness and proportionality are readily achievable, while no
positive ex-post approximation is possible for the fairness notions
considered here, we study the intermediate notions of high-probability
fairness and expected realized fairness.
Against a non-adaptive adversary, we give a randomized algorithm for
proportionality up to one good (PROP1) whose parameter depends only on
\(n\) and that preserves exact ex-ante envy-freeness and proportionality.
At confidence \(1-\delta\), its PROP1 guarantee improves on independent
uniform allocation (\(\Rand\)) by a factor of \(\Omega(\log n)\),
uniformly over \(\delta\in(0,1/2]\).
As \(n\to\infty\), its expected realized PROP1 factor is at
least \(\frac{3-\sqrt5}{2}-o(1)\). We also show that the expected realized PROP1 factor of \(\Rand\) is
\((1+o(1))/\log n\), yielding an improvement of at least
\(\bigl(\frac{3-\sqrt5}{2}-o(1)\bigr)\log n\) for our algorithm.
For every randomized online algorithm and every positive approximation factor, 
the success probability can be made arbitrarily small for envy freeness up to 
any good (EFX) and at most \(\frac{n+1}{2n}\) for envy freeness up to one good (EF1).
Consequently, every randomized fully online algorithm has an expected
realized EFX guarantee of zero and an expected realized EF1 guarantee
of at most \(\frac{n+1}{2n}\).
\end{abstract}

\noindent\textbf{Keywords:} Online fair division, randomized algorithms,
high-probability fairness, expected realized fairness.

\input{sections/overview}
\input{sections/preliminaries}
\input{sections/prop1_osq}
\input{sections/impossibility}
\input{sections/conclusion}
\label{end:main}
\bibliographystyle{ACM-Reference-Format}
\bibliography{references}
\end{document}

%% file: sections/overview.tex
\section{Introduction}

Fair division studies how resources can be allocated among agents with
different preferences in a fair manner. In the classical offline model,
all resources and valuations are known before an allocation is chosen.
Many applications, however, require resources to be allocated as they
arrive, without knowledge of future arrivals. This gives rise to
\emph{online fair division} \cite{AleksandrovEtAl2015}, where each decision must be made using only the information revealed so far.

We study \emph{randomized algorithms} for the fully online allocation of
indivisible goods among \(n\ge2\) agents with nonnegative
additive valuations. Goods
arrive one by one, and every good must be allocated immediately and
irrevocably after its valuation vector is revealed. Nothing is known in advance except the number of agents. We consider a non-adaptive
adversary, so the entire input is fixed independently of the
algorithm's random choices.

Randomized allocations are commonly evaluated from both ex-ante and
ex-post perspectives \cite{AzizFreemanShahVaish2024}. Ex-ante fairness
is determined by the fractional allocation induced by the marginal
allocation probabilities, and is often easy to achieve. For example,
a rule that selects one agent uniformly at random and allocates every
good to that agent is exactly ex-ante envy-free and proportional, even
though every realized allocation can be highly unfair. Thus, ex-ante
fairness alone provides little control over realized outcomes.

At the other extreme, ex-post fairness requires every allocation in
the support of the randomized algorithm to satisfy the desired fairness
guarantee. This requirement is too strong in this model.
Recent work establishes that deterministic online algorithms
cannot guarantee any positive multiplicative approximation to 
proportionality up to one good (PROP1),
with the impossibility extending to a broad range of standard
envy-based, proportionality-based, and share-based fairness notions
\cite{NeohTeh2026}. As we show in Lemma~\ref{lem:expost-equivalence}, these
deterministic impossibility results also rule out the corresponding
ex-post guarantees for randomized algorithms, even against a non-adaptive
adversary. Hence ex-ante fairness
can be too weak to control realized outcomes, while ex-post fairness can
be too strong to achieve.

Motivated by this gap between ex-ante and ex-post fairness, we study
realized guarantees for PROP1, envy-freeness up to one good (EF1), and
envy-freeness up to any good (EFX). For a fairness notion \(F\), let
\(\rho_F\in[0,1]\) denote the approximation factor achieved by a realized
allocation. For PROP1, this factor measures each agent's value after
allowing one outside good relative to her proportional share. For EF1
and EFX, it measures the remaining envy after, respectively, removing
one suitably chosen good or any good from another agent's bundle.
We evaluate \(\rho_F\) through \emph{high-probability
fairness} and \emph{expected realized fairness}. A high-probability 
guarantee controls the lower tail of \(\rho_F\),
whereas an expected realized guarantee controls its expectation. Their
formal definitions and relation are given in
\cref{subsec:randomized-guarantees}.

For PROP1, independent uniform allocation achieves the tight
high-probability factor
\(\Theta(1/\log(n/\delta))\) for
\(\delta\in(0,1/2]\) against a non-adaptive adversary
\cite{ChooEtAl2026}. Neoh and
Teh~\cite[Section~6]{NeohTeh2026} ask whether a general randomized
online algorithm can improve on this guarantee for unrestricted
instances. For the stronger envy-based notions EFX and EF1, we ask
whether relaxing ex-post fairness to
high-probability or expected realized guarantees can recover any
non-trivial approximation in this model.
We answer both questions below.

\subsection{Our Contributions}

\paragraph{Improved guarantees for PROP1.}
We introduce the \emph{Ordered Scale Quota} algorithm (\(\OSQ\)), a
randomized online allocation rule whose group size depends only on \(n\).
It satisfies exact ex-ante envy-freeness and proportionality
(\cref{lem:osq-marginals}).

Against a non-adaptive adversary, for all sufficiently large \(n\),
\cref{thm:osq-high-probability} gives an explicit guarantee for every
fixed input and every failure probability \(\delta\in(0,1)\).
At the same confidence level \(1-\delta\), \(\OSQ\) improves the
high-probability PROP1 guarantee of \(\Rand\) by a factor
of \(\Omega(\log n)\), uniformly over \(\delta\in(0,1/2]\).

In particular, for every fixed \(c>0\), all sufficiently large \(n\),
and every fixed input \(I\) with \(n\) agents,
\[
 \Prb\left(\rho_{\mathrm{PROP1}}(\OSQ,I)
        \ge\frac{3-\sqrt5}{2}-o(1)\right)\ge1-n^{-c}.
\]
The same asymptotic lower bound holds for its expected
realized PROP1 guarantee:
\[
 R_{\mathrm{PROP1}}^{\OSQ}(n)
 \ge\frac{3-\sqrt5}{2}-o(1)
 \qquad\text{(\cref{cor:osq-expected})},
\]
where \(R_F^{\mathcal A}(n)\) denotes the expected realized
guarantee of algorithm \(\mathcal A\) for fairness notion~\(F\).
All error terms are independent of the input.

For the independent uniform rule \(\Rand\), we prove the asymptotically
sharp bound
\[
 R_{\mathrm{PROP1}}^{\Rand}(n)
 =\frac{1+o(1)}{\log n}
 \qquad\text{(\cref{prop:rand-prop1})}.
\]
Thus \(\OSQ\) improves the expected realized PROP1 guarantee
of \(\Rand\) by a factor of at least
\(\bigl(\frac{3-\sqrt5}{2}-o(1)\bigr)\log n\).

Our analysis combines a deterministic certificate based on the shared
scale records with an exponential-moment bound obtained by induction
along the fixed agent order.

\paragraph{Limitations for EFX and EF1.}
For every \(n\ge2\) and every randomized fully online algorithm,
the success probability of any positive EFX approximation factor can
be made arbitrarily small, even when all agents have identical
valuations (\cref{thm:efx-success}). Consequently,
\(
R_{\mathrm{EFX}}^{\mathcal A}(n)=0
\)
for every such algorithm \(\mathcal A\)
(\cref{cor:efx-expected}).

For EF1, no randomized online algorithm can guarantee a positive
approximation factor with confidence exceeding
\(\frac{n+1}{2n}\) (\cref{thm:ef1-success}). Moreover,
\(
R_{\mathrm{EF1}}^{\mathcal A}(n)
\le
\frac{n+1}{2n}
\)
for every such algorithm \(\mathcal A\)
(\cref{cor:ef1-expected}).

\paragraph{Paper organization.}
The remainder of the paper is organized as follows.
\Cref{sec:model} introduces the model and formalizes the fairness
criteria and randomized guarantees.
\Cref{sec:prop1} presents \(\OSQ\), establishes its PROP1 guarantees,
and compares them with those of \(\Rand\).
\Cref{sec:impossibility} proves the limitations for EFX and EF1.

\subsection{Related Work}

\paragraph{Online fair division.}
Aleksandrov et al.~\cite{AleksandrovEtAl2015} initiated the systematic
study of online fair division of indivisible goods in a food-bank
setting, where goods arrive sequentially and must be allocated upon
arrival. They introduced and analyzed simple online mechanisms,
including the Like and Balanced Like rules, from the perspectives of
fairness, incentives, and efficiency. Benad\`e et
al.~\cite{BenadeEtAl2018} subsequently studied quantitative envy in a
closely related online model. They showed that, although envy generally
cannot be kept bounded, its growth can be made sublinear, obtaining
asymptotically tight vanishing-envy guarantees. Aleksandrov and
Walsh~\cite{AleksandrovWalsh2020} survey this early literature and
organize the different models, mechanisms, and normative properties
that arise in online fair division.

More recent work studies multiplicative approximations to standard fair
division notions in the fully online setting.
Choo et al.~\cite{ChooEtAl2026} study approximate PROP1
guarantees. They show that several natural greedy algorithms fail to guarantee any
positive approximation against an adaptive adversary, while against a
non-adaptive adversary the independent uniform rule \(\Rand\) achieves
a meaningful high-probability guarantee. In
particular, they determine its tight asymptotic order
\(\Theta(1/\log(n/\delta))\) for
\(\delta\in(0,1/2]\). Neoh and Teh~\cite{NeohTeh2026} further show that
no positive multiplicative approximation to PROP\(k\), for any fixed
\(k\), is possible against an adaptive adversary, with the impossibility
extending to a broad range of envy-based, proportionality-based, and
share-based fairness notions, including EF1 and EFX. Under a
non-adaptive adversary, they also characterize the
high-probability PROP1 performance of the Like rule, which assigns each
good uniformly among the agents who value it positively. They further
ask whether a general randomized online algorithm can outperform
independent uniform allocation on unrestricted instances.
Our work answers this question by using
correlated randomization to obtain stronger realized PROP1 guarantees.

\paragraph{Ex-ante and ex-post fairness.}
A growing line of work studies randomized allocations that combine
ex-ante fairness with guarantees for every realized allocation.
Aziz et al.~\cite{AzizFreemanShahVaish2024} show that, for additive
valuations, one can simultaneously achieve exact ex-ante envy-freeness
and ex-post EF1. Babaioff et al.~\cite{BabaioffEzraFeige2022}
combine ex-ante proportionality with ex-post fair-share guarantees.
Feldman et al.~\cite{FeldmanEtAl2024} extend this best-of-both-worlds
approach to subadditive valuations, obtaining ex-ante
\(1/2\)-envy-freeness together with ex-post EF1 and
\(1/2\)-EFX.

These results are obtained in offline settings and require the desired
ex-post guarantee to hold for every allocation in the support. In the
fully online setting studied here, such support-wide guarantees can be
impossible. We instead evaluate the fairness of realized allocations
through high-probability and expected realized guarantees.

\paragraph{Additional information and alternative online models.}
A complementary line of work studies how additional information about
future goods changes the guarantees achievable online. One common model
assumes that each agent's total value for all goods is known in advance,
or equivalently that valuations are normalized
\cite{ZhouBaiWu2023,NeohPetersTeh2026,ChenTan2026}.
Other forms of partial information have also been considered.
Choo et al.~\cite{ChooEtAl2026} introduce predictions of each agent's
maximum item value (MIV), and subsequent work strengthens the resulting
PROP1 guarantees and studies the limits of exact MIV information
\cite{NeohTeh2026,DuSun2026}. Neoh et al.~\cite{NeohPetersTeh2026} also study frequency predictions, where
the multiset of values that each agent will assign to future goods is
known but their arrival order is not.

Temporal fair division can be viewed as the complete-information end of
this spectrum. The allocation remains sequential and irrevocable, but
the future sequence is known in advance and fairness is required for
the cumulative allocation at every prefix rather than only at the end.
Elkind et al.~\cite{ElkindEtAl2025} introduce this informed temporal
model for indivisible items and study temporal EF1. Related work
investigates per-round and cumulative fairness, as well as stronger
temporal notions such as EFX and MMS
\cite{CooksonEbadianShah2025,ChoiLi2026,ChoiLiTeh2026}.

Another way to obtain positive online guarantees is to restrict the
valuation domain. Amanatidis et al.~\cite{AmanatidisEtAl2025} study
personalized two-value instances, where each agent assigns one of two
agent-specific values to every good. Wang and Wei~\cite{WangWei2026}
study binary and personalized bi-valued valuations for goods and
chores, obtaining fairness and efficiency guarantees unavailable for
general additive valuations. 

Other work modifies the online allocation model itself. Reordering
buffers allow a limited number of arriving items to be stored and
reordered before allocation \cite{AmanatidisEtAl2026Buffer}.
Budget-constrained online fair division restricts which assignments are
feasible and evaluates fairness relative to budget-feasible bundles
\cite{CohenEtAl2026}. Online fair division with subsidies allows
monetary transfers to restore envy-freeness
\cite{KulkarniEtAl2026}. Related work also studies online fairness when
previous allocations may be revised through a bounded number of
reallocations \cite{HeEtAl2019}. 


%% file: sections/preliminaries.tex
\section{Model and Fairness Guarantees}\label{sec:model}

\subsection{Model and realized fairness scores}

There are \(n\ge2\) agents, indexed by \([n]=\{1,\ldots,n\}\), and a finite sequence
\(I=(g_1,\ldots,g_m)\) of indivisible goods.
Agent \(i\in[n]\) has a nonnegative additive valuation \(v_i\),
meaning that \(v_i(S)=\sum_{g\in S}v_i(\{g\})\) for every bundle \(S\).
We write \(v_i(g)\) as shorthand for \(v_i(\{g\})\).
On arrival, the vector \(v(g_t)=(v_i(g_t))_{i\in[n]}\) is revealed and
\(g_t\) must be assigned immediately to one agent. Previous assignments
cannot be changed, and goods cannot be discarded. Only \(n\) is known in 
advance, while the horizon \(m\), future valuation
vectors, and total values remain unknown.
A non-adaptive adversary fixes the complete input independently
of the algorithm's randomness.
Let \(M=\{g_1,\ldots,g_m\}\), \(V_i=v_i(M)\), and
\(A=(A_1,\ldots,A_n)\) denote a terminal allocation.
We use \(\mathcal A\) for a randomized online algorithm and
\(A^{\mathcal A}(I)\) for its random terminal allocation on input \(I\).
All probabilities and expectations are over the algorithm's internal
randomness, with the input held fixed.

For a fixed allocation \(A\), we measure its approximation factors for
PROP1~\cite{ConitzerFreemanShah2017}, EF1~\cite{Budish2011}, and
EFX~\cite{CaragiannisEtAl2019} by scores in \([0,1]\).
Define \(h(x,0)=1\) and \(h(x,d)=\min\left\{1,\frac{x}{d}\right\}\) for \(x\ge0,d>0\).
Writing \(U_i=v_i(A_i)\) and
\(b_i=\max(\{v_i(g):g\notin A_i\}\cup\{0\})\), put
\[
 \rho_{\mathrm{PROP1}}(A)=\min_i h\left(U_i+b_i,\frac{V_i}{n}\right).
\]
For \(\alpha\in[0,1]\), \(\rho_{\mathrm{PROP1}}(A)\ge\alpha\) means every agent reaches an
\(\alpha\) fraction of her proportional share after adding at most one
outside good.

For the envy-based notions \(F\in\{\mathrm{EF1},\mathrm{EFX}\}\), set
\[
 \rho_F(A)=\min_{i\ne j}h(U_i,d_{ij}^F),
\]
where, for a nonempty \(A_j\),
\[
 d_{ij}^{\mathrm{EF1}}=v_i(A_j)-\max_{g\in A_j}v_i(g),
 \qquad
 d_{ij}^{\mathrm{EFX}}=v_i(A_j)-\min_{g\in A_j}v_i(g).
\]
Both denominators are zero for an empty bundle.
EF1 permits deleting one good chosen for the comparison, whereas EFX requires the
comparison after every deletion, including a zero-valued good.
All scores belong to \([0,1]\), equal one
exactly when the corresponding fairness requirement holds.
For each \(F\in\{\mathrm{PROP1},\mathrm{EF1},\mathrm{EFX}\}\), the event
\(\rho_F(A)\ge\alpha\) is precisely simultaneous \(\alpha\)-\(F\)
fairness of the allocation.

\subsection{High-probability and expected realized guarantees}\label{subsec:randomized-guarantees}

For \(F\in\{\mathrm{PROP1},\mathrm{EF1},\mathrm{EFX}\}\), write
\[
 \rho_F(\mathcal A,I):=\rho_F\bigl(A^{\mathcal A}(I)\bigr).
\]
This is the realized score of the output, viewed as a random variable.

\paragraph{High-probability fairness.}
For a failure probability \(\delta\in(0,1)\), an algorithm has an
\(\alpha(n,\delta)\)-\(F\) guarantee with confidence \(1-\delta\) if,
for every fixed input \(I\),
\[
 \Prb_{\mathcal A}\bigl(\rho_F(\mathcal A,I)
                \ge\alpha(n,\delta)\bigr)\ge1-\delta,
\]
where \(\alpha(n,\delta)\in[0,1]\) is independent of \(I\).
This is one event in which all required agent or pair comparisons
hold simultaneously. Letting \(\delta\) tend to zero gives the usual
asymptotic high-probability interpretation.

\paragraph{Expected realized fairness.}
We evaluate an algorithm by its worst-case expected realized score:
\[
 R_F^{\mathcal A}(n)
 =\inf_I\E_{\mathcal A}\bigl[\rho_F(\mathcal A,I)\bigr].
\]
The infimum ranges over all finite inputs with \(n\) agents in the
model above. The minimum over agents or pairs is taken within each
realized allocation before expectation is taken.
For each \(n\), this criterion summarizes an algorithm's realized fairness
by a single expected factor, without requiring a choice of the failure
probability \(\delta\).

\paragraph{The tail-integral relation.}
Fix an algorithm and an input, and put
\(X=\rho_F(\mathcal A,I)\in[0,1]\). Its mean is the area under its
success-probability curve:
\[
 \E[X]
 =\int_0^1\Prb(X\ge z)\,\mathrm dz.
\]
For a single threshold \(\alpha\in[0,1]\), splitting according to
whether \(X\ge\alpha\) gives the useful bounds
\[
 \alpha\Prb(X\ge\alpha)
 \le\E[X]
 \le\alpha+(1-\alpha)\Prb(X>\alpha).
\]
Consequently, a confidence guarantee valid for every input implies
\(R_F^{\mathcal A}(n)\ge(1-\delta)\alpha(n,\delta)\).
This is how the PROP1 probability bound in \cref{sec:prop1} yields
an expected guarantee. The upper inequality converts the small
success probabilities in \cref{sec:impossibility} into upper bounds 
on expected realized fairness.

\subsection{Relation to ex-ante and ex-post fairness}\label{subsec:exante-expost}

\paragraph{Ex-ante fairness evaluates expected utilities.}
For a fixed input \(I\), let \(A=A^{\mathcal A}(I)\) be the random output.
Exact ex-ante EF requires
\(\E[v_i(A_i)]\ge\E[v_i(A_j)]\) for every \(i,j\),
while exact ex-ante PROP requires \(\E[v_i(A_i)]\ge {V_i}/{n}\) for every \(i\).
The rule \(\Rand\) chooses each recipient independently and uniformly.
More generally, if each good has probability \({1}/{n}\) of going to each
agent, linearity of expectation gives
\[
 \E[v_i(A_j)]
 =\sum_{g\in M}v_i(g)\Prb(g\in A_j)
 =\frac{V_i}{n} \qquad(i,j\in[n]).
\]
This identity proves exact ex-ante EF and PROP without requiring
independence across goods.

\paragraph{Ex-post fairness evaluates every supported outcome.}
For \(\alpha\in[0,1]\), an algorithm has a support-wide ex-post
\(\alpha\)-\(F\) guarantee if, for every input, every allocation in its
support has score at least \(\alpha\). Since a fixed input has at most
\(n^m\) allocations, this is equivalent to requiring that, for every
fixed input \(I\),
\[
 \Prb_{\mathcal A}\bigl(\rho_F(\mathcal A,I)\ge\alpha\bigr)=1.
\]
It is the zero-failure endpoint of a confidence guarantee and implies
\(R_F^{\mathcal A}(n)\ge\alpha\).
Expected realized and high-probability fairness allow some outcomes
to have lower scores.
The following lemma shows that randomization does not improve
support-wide guarantees in this model.

\begin{lemma}[Randomization does not improve ex-post guarantees]\label[lemma]{lem:expost-equivalence}
For any \(\alpha\ge0\), if a randomized fully online algorithm \(\mathcal A\) has a support-wide
ex-post \(\alpha\)-\(F\) guarantee against a non-adaptive adversary,
then some deterministic fully online algorithm has the same guarantee.
\end{lemma}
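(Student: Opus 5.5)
The plan is to derandomize by fixing the random bits in a way that is consistent across all inputs. A single global seed will not work, since the guarantee holds only almost surely per input. Instead we build a deterministic algorithm that, at each step, chooses a recipient that is still \emph{safe}, meaning it keeps the run inside the support of \(\mathcal A\). Formally, view \(\mathcal A\) as follows: on the history \(h_t=(v(g_1),a_1,\dots,v(g_{t-1}),a_{t-1},v(g_t))\) of revealed valuation vectors and past assignments, it selects \(a_t\) with some conditional probability \(q(a_t\mid h_t)\). The internal randomness may be arbitrary, but conditioning on the observed history gives such kernels. Online causality means \(q\) depends only on the prefix, not on future goods.

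Define the deterministic algorithm \(\mathcal D\) by letting \(a_t\) be the smallest-index agent with \(q(a_t\mid h_t)>0\), where \(h_t\) is the history generated by \(\mathcal D\) itself. Such an agent always exists because \(q(\cdot\mid h_t)\) is a probability distribution. For this we need \(h_t\) to be reachable, i.e.\ to have positive probability under \(\mathcal A\). This holds by induction, since each chosen action has positive conditional probability. \(\mathcal D\) is fully online: it uses only the revealed prefix and \(n\).

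Now fix any finite input \(I=(g_1,\dots,g_m)\). By the chain rule, the allocation produced by \(\mathcal D\) on \(I\) has probability \(\prod_{t} q(a_t\mid h_t)>0\) under \(\mathcal A\) on \(I\). The allocation space is finite (at most \(n^m\) outcomes), so this allocation lies in the support of \(A^{\mathcal A}(I)\). The support-wide guarantee therefore gives \(\rho_F\ge\alpha\) for it. Since \(I\) was arbitrary, \(\mathcal D\) has the same ex-post \(\alpha\)-\(F\) guarantee.

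The main obstacle is making the conditional kernels rigorous when \(\mathcal A\)'s randomness is general, for instance a continuous seed or randomness that depends on unobserved internal state. The cleanest route is to define \(q(\cdot\mid h_t)\) as the law of \(a_t\) given the event that the first \(t-1\) assignments match \(h_t\) on inputs starting with that prefix. This event has positive probability by reachability. Well-definedness independent of the input's continuation follows from non-anticipation: the joint law of \((a_1,\dots,a_t)\) depends only on \(g_1,\dots,g_t\). A subtle point is that the guarantee must hold for inputs of every length \(m\), with \(m\) unknown. This is handled because \(\mathcal D\)'s choices on a prefix do not depend on \(m\), and the argument above applies to each \(I\) separately.
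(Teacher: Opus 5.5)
Your proposal is correct and follows essentially the same argument as the paper: both define \(\mathcal D\) to pick the smallest-index recipient with positive conditional probability under \(\mathcal A\), given the revealed prefix and the previously selected recipients. Both then show by induction on arrivals that the resulting run has positive probability under \(\mathcal A\), so it lies in the support of \(A^{\mathcal A}(I)\). Your extra discussion of how to define the conditional kernels via non-anticipation spells out what the paper covers with the phrase ``since \(\mathcal A\) is online.''
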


\begin{proof}
Neoh et al.~\cite[Lemma~A.1]{NeohPetersTeh2026}
prove a related result against adaptive adversaries, requiring
the guarantee to hold for every realization of the algorithm's internal
randomness. Here, against a non-adaptive adversary, the guarantee is
required only for outputs with positive probability on each fixed
input. We therefore construct a deterministic rule whose output belongs
to the corresponding support.

At each arrival, choose the smallest-index recipient to whom
\(\mathcal A\) assigns the current good with positive conditional
probability, given the revealed valuation vectors and the previously
selected recipients. Since \(\mathcal A\) is online, this defines
a single deterministic online rule \(\mathcal D\) without future
information. For every fixed finite input \(I\), induction on the
arrivals shows that the recipient history selected by \(\mathcal D\)
has positive probability under \(\mathcal A\). Hence
\(A^{\mathcal D}(I)\) belongs to the support of \(A^{\mathcal A}(I)\),
and the support-wide guarantee gives
\(\rho_F(A^{\mathcal D}(I))\ge\alpha\).

Since every deterministic algorithm is also a degenerate randomized
algorithm, deterministic and randomized online algorithms have
the same achievable support-wide factors in this model.
\end{proof}

For every fixed \(n\ge2\), Neoh and
Teh~\cite[Theorem~3.1]{NeohTeh2026} rule out every
positive deterministic approximation to PROP\(k\) for each fixed
\(k\ge1\), and derive corresponding impossibilities for a broad range
of envy-based, proportionality-based, and share-based fairness notions,
including PROP1, EF1, and EFX. For a fixed deterministic algorithm,
their adaptive constructions produce fixed finite bad inputs.
\cref{lem:expost-equivalence} therefore rules out every positive
randomized support-wide ex-post factor for PROP1, EF1, and EFX against a
non-adaptive adversary.

These impossibilities alone do not rule out high-probability guarantees
or bound the expected realized score, because a bad supported outcome
may have arbitrarily small probability. The finite-distribution
arguments in \cref{sec:impossibility} control the relevant probabilities
on fixed inputs to obtain expected-score upper bounds.

%% file: sections/prop1_osq.tex
\section{Improved PROP1 Guarantees}\label{sec:prop1}

We introduce the \emph{Ordered Scale Quota} algorithm
(\(\OSQ\)), a randomized online allocation rule whose group size depends
only on \(n\). Against a non-adaptive adversary, for every fixed \(c>0\),
all sufficiently large \(n\), and every fixed input with \(n\) agents,
\(\OSQ\) achieves a realized PROP1 factor of at least
\(\frac{3-\sqrt5}{2}-o(1)\) with probability at least \(1-n^{-c}\).
The same asymptotic lower bound holds for its expected
realized PROP1 guarantee.

At the same confidence level \(1-\delta\), \(\OSQ\) improves the
high-probability PROP1 guarantee of \(\Rand\) by a factor
of \(\Omega(\log n)\), uniformly over \(\delta\in(0,1/2]\).
We also prove the sharp asymptotic guarantee
\((1+o(1))/\log n\) for \(\Rand\) under the expected realized
criterion. Thus, under this criterion,
\(\OSQ\) improves on \(\Rand\) by a factor of at least
\(\bigl(\frac{3-\sqrt5}{2}-o(1)\bigr)\log n\).

The proof combines a deterministic certificate based on the shared
scale records with an exponential-moment bound obtained by induction
along the fixed agent order.

\subsection{The Ordered Scale Quota Algorithm}

\paragraph{High-level idea.}
Under \(\Rand\), goods are allocated independently, so an agent may
miss many goods she values.  Our algorithm instead lets goods pass
through agents in a fixed order and applies a quota separately to each
agent and value scale: exactly one of every \(k\) consecutive visits
to a record is accepted, while the others continue through the
algorithm.  Thus allocations of similarly valued goods are coordinated
rather than independent.  Every completed group gives the agent one
good, and at most one group per scale remains unfinished.  This
structure allows the algorithm to use more opportunities to consider
valuable goods while still guaranteeing a constant realized PROP1
factor.

\paragraph{Formal description.}
Fix \(q=\frac{3+\sqrt5}{2}\), an integer \(k\ge2\), and the deterministic agent order
\(1,2,\ldots,n\).  For each agent \(j\in[n]\), define the activation
probability
\begin{equation}\label{eq:osq-pj}
    p_j:=\frac{k}{n+k-j}.
\end{equation}
Since \(n+k-j\ge k\), we have \(p_j\in(0,1]\).

For every agent \(j\) and value scale \(d\), \(\OSQ\) maintains a single
persistent record \((j,d)\).  Positive values use the scales
\([q^d,q^{d+1})\), \(d\in\mathbb Z\), while zero values use a separate scale
\(\bot\).  Consecutive \(k\) visits to the same record form one group.
At the first visit of a group, a winning position is drawn uniformly from
\([k]\).  The group closes only after its \(k\)-th visit, regardless of when
the winning position appears.

When a good \(g\) arrives, agents are scanned in the fixed order.
At agent \(j\), an independent Bernoulli activation with probability \(p_j\)
is drawn.  If \(j\) is not activated, the good is passed to the next agent
without visiting a record.  If \(j\) is activated, the good visits the record
determined by the scale of \(v_j(g)\).  A visit at the winning position is
accepted by \(j\); every other visit is forwarded.  If no agent accepts the
good, it is assigned by a fresh uniform fallback draw from \([n]\).

\begin{algorithm}[H]
\caption{Ordered Scale Quota (\(\OSQ(k)\))}
\label{alg:osq}
\begin{algorithmic}[1]
\Require \(n\ge2\) agents and an integer \(k\ge2\).
\Statex \textbf{State:} Record \((j,d)\) is indexed by agent \(j\) and scale \(d\).
\Statex \hspace{\algorithmicindent}
\(c_{jd}\) counts visits in the current group, and \(W_{jd}\) is its winning position.
\Statex \textbf{Randomness:} All activation, winning-position, and fallback draws are mutually independent.
\State Initialize an empty collection of records, retained across goods.
\For{each arriving good \(g\) with revealed values \(v(g)=(v_i(g))_{i\in[n]}\)}
    \For{\(j=1,\ldots,n\)}
        \State Draw \(B_j(g)\sim\operatorname{Bernoulli}(p_j)\), where
        \(p_j=\frac{k}{n+k-j}\).
        \If{\(B_j(g)=0\)}
            \State \textbf{continue} to agent \(j+1\).
        \EndIf
        \State Set \(d=\lfloor\log_q v_j(g)\rfloor\) if \(v_j(g)>0\), and
        \(d=\bot\) otherwise.
        \If{record \((j,d)\) does not exist}
            \State Create it with \(c_{jd}=0\).
        \EndIf
        \If{\(c_{jd}=0\)}
            \State Draw a fresh \(W_{jd}\) uniformly from \([k]\).
        \EndIf
        \State Set \(c_{jd}\gets c_{jd}+1\).
        \State Set \(\mathit{win}\gets(c_{jd}=W_{jd})\).
        \If{\(c_{jd}=k\)}
            \State Reset \(c_{jd}=0\).
        \EndIf
        \If{\(\mathit{win}\)}
            \State Assign \(g\) to \(j\) and skip to the next good.
        \EndIf
    \EndFor
    \State Assign \(g\) to a fresh uniform agent in \([n]\).
\EndFor
\end{algorithmic}
\end{algorithm}


The activation probabilities are chosen to balance the allocation
probabilities across the fixed agent order. The following proposition
shows that every good has a uniform marginal recipient, which also gives
exact ex-ante EF and PROP.

\begin{proposition}[Uniform marginals]\label[proposition]{lem:osq-marginals}
For every fixed input, every good \(g\), and every agent \(i\),
\[
    \Prb_{\OSQ(k)}(g\to i)=\frac1n.
\]
Consequently, \(\OSQ(k)\) satisfies ex-ante EF and ex-ante PROP.
\end{proposition}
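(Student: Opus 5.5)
The plan is to compute, for a fixed input and a fixed good \(g\), the probability \(r_j\) that \(g\) reaches agent \(j\) (that is, is not accepted by any agent \(1,\ldots,j-1\)). From \(r_j\) I will derive the acceptance probability at \(j\) and the fallback probability, and check that they combine to \(1/n\). The ex-ante consequences then follow immediately from the linearity identity in \cref{subsec:exante-expost}, which needs only uniform marginals and not independence across goods.

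\textbf{Key step: a decoupling claim.} I will show that, conditionally on reaching \(j\) and being activated there, \(g\) is accepted by \(j\) with probability exactly \(1/k\). Conditional on the history this is false, because earlier acceptances at record \((j,d)\) reveal information about \(W_{jd}\). So the argument must condition on the right \(\sigma\)-algebra.

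Let \(\mathcal G\) be generated by all randomness of agents \(1,\ldots,j-1\) (their activations and winning positions, over all goods), together with the activations \(B_j(g')\) for all goods \(g'\) up to and including \(g\). I claim two facts.
\begin{itemize}
\item The set of goods up to \(g\) that reach agent \(j\) is \(\mathcal G\)-measurable. Whether a good reaches \(j\) depends only on agents preceding \(j\), and acceptances at \(j\) or later never affect what happens at agents \(1,\ldots,j-1\).
\item Consequently the visit counter \(c_{jd}\) that \(g\) will occupy is \(\mathcal G\)-measurable. Visits to \((j,d)\) are exactly the goods that reach \(j\), are activated there, and have the relevant scale. Whether those goods were accepted does not change the counter, since groups close only after \(k\) visits.
\end{itemize}
The winning position \(W_{jd}\) of the group containing \(g\)'s visit is a fresh uniform draw, independent of \(\mathcal G\). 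Hence on the \(\mathcal G\)-event that \(g\) reaches and activates \(j\), the probability of winning is \(\Prb(W_{jd}=c)=1/k\), where \(c\) is the \(\mathcal G\)-measurable position of \(g\)'s visit.

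Since \(B_j(g)\) is independent of the randomness of agents \(1,\ldots,j-1\), this gives
\[
\Prb(g\to j)=r_j\cdot\frac{p_j}{k}=\frac{r_j}{n+k-j},
\qquad
r_{j+1}=r_j\Bigl(1-\frac{1}{n+k-j}\Bigr).
\]

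\textbf{Solving the recursion.} Starting from \(r_1=1\), the product telescopes to
\[
r_j=\frac{n+k-j}{n+k-1}.
\]
Therefore each \(j\) accepts \(g\) with probability \(1/(n+k-1)\). The fallback occurs with probability
\[
r_{n+1}=\frac{k-1}{n+k-1},
\]
and it contributes \((k-1)/\bigl(n(n+k-1)\bigr)\) to each agent, because the fallback draw is fresh and independent. Summing,
\[
\frac{1}{n+k-1}+\frac{k-1}{n(n+k-1)}=\frac1n.
\]

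The main obstacle is the decoupling claim. I will present it carefully as the measurability statement above: the counter position is determined without reference to any winning position of agent \(j\), while \(W_{jd}\) is independent of everything that determines it. The rest is routine algebra.
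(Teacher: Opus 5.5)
Your proof is correct and follows the paper's argument: acceptance probability \(p_j/k=1/(n+k-j)\) at each filter, the telescoping reach probability \(\frac{n+k-j}{n+k-1}\), and a uniform fallback with probability \(\frac{k-1}{n+k-1}\), which sum to \(1/n\), followed by linearity for ex-ante EF and PROP. Your \(\sigma\)-algebra argument is a more careful version of the paper's one-line claim that ``conditional on activation, its visit is the winning position with probability \(1/k\)'': the paper does not justify this step, and as you note it fails if one conditions on the full history.
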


\begin{proof}
Fix a good \(g\). Conditional on reaching filter \(j\), it activates
\(j\) with probability \(p_j\); conditional on activation, its visit is
the winning position with probability \(1/k\). Thus \(g\) is accepted
at filter \(j\) with probability \(p_j/k\), and otherwise forwarded.

Regard fallback as stage \(n+1\). A good reaches stage \(i\) exactly
when all preceding filters forward it. Hence, by \eqref{eq:osq-pj},
for every \(i\in[n+1]\), the probability that \(g\) reaches stage \(i\) is
\[
    \prod_{j=1}^{i-1}\left(1-\frac{p_j}{k}\right)=
    \prod_{j=1}^{i-1}\frac{n+k-j-1}{n+k-j}
    =
    \frac{n+k-i}{n+k-1}.
\]
In particular, the probability of reaching fallback is
\(\frac{k-1}{n+k-1}\). Since the fallback recipient is uniform, for
every \(i\in[n]\),
\[ \begin{aligned} \Prb(g\to i) &= \left(\prod_{j=1}^{i-1} \left(1-\frac{p_j}{k}\right)\right)\frac{p_i}{k} +\frac1n\prod_{j=1}^{n} \left(1-\frac{p_j}{k}\right)\\ &= \frac{n+k-i}{n+k-1}\cdot\frac1{n+k-i} +\frac1n\cdot\frac{k-1}{n+k-1} =\frac1n. \end{aligned} \]

Consequently, for every pair of agents \(i,\ell\), additivity and
linearity of expectation give
\[
    \E[v_i(A_\ell)]
    =
    \sum_g v_i(g)\Prb(g\to\ell)
    =
    \frac{V_i}{n}.
\]
Thus every agent has the same expected value for every bundle, and
\(\OSQ(k)\) satisfies ex-ante EF and ex-ante PROP.
\end{proof}

\subsection{The Main PROP1 Guarantee}

Recall that \(U_i=v_i(A_i)\) and
\(b_i=\max_{g\notin A_i}v_i(g)\), with \(b_i=0\) if \(A_i\) contains
all goods.  Thus \(U_i+b_i\) is the realized PROP1 quantity of agent \(i\).

We prove the main guarantee in two steps. First, we establish a
deterministic certificate relating the total value of visits to an
agent's records to her realized PROP1 quantity \(U_i+b_i\). We then
control the lower tail of this total through an exponential-moment
bound. We begin with the deterministic step.

Fix a deterministic input, a realization of the algorithmic randomness, and
an agent \(i\).  Let \(S_i\) be the total value to \(i\) of goods that reach
filter \(i\) and activate \(i\); equivalently, \(S_i\) is the total
\(i\)-value of all visits to records of agent \(i\).  Let \(X_i\) be the
value received by \(i\) through acceptance at its own filter, and let
\(U_i^{\mathrm{fb}}\) be the value received by \(i\) through fallback.
Because an item that passes filter \(i\) can subsequently be assigned only
to an agent \(j>i\) or by fallback,
\[
    U_i=X_i+U_i^{\mathrm{fb}}.
\]

\begin{lemma}[Deterministic certificate]\label[lemma]{lem:certificate}
For every realization,
\[
    S_i\le\bigl(1+q(k-1)\bigr)(U_i+b_i).
\]
\end{lemma}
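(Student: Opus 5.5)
We prove the bound by decomposing \(S_i\) record by record. For each record \((i,d)\), we compare the total value of its visits with the value of the goods accepted from it plus a single leftover term controlled by \(b_i\).

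\textbf{Completed groups.} Fix a positive scale \(d\), so every visit to \((i,d)\) has \(i\)-value in \([q^d,q^{d+1})\). The visits to this record split into consecutive groups of \(k\) visits, with at most one final group incomplete. Each completed group contains exactly one accepted visit, whose good goes to \(i\) and has value at least \(q^d\). The other \(k-1\) visits each have value below \(q^{d+1}\). Hence the total value of a completed group is at most the accepted value plus \((k-1)q^{d+1}\le (k-1)q\cdot(\text{accepted value})\). So it is at most \(\bigl(1+q(k-1)\bigr)\) times the value \(i\) receives from that group.

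\textbf{The incomplete group.} The final group has \(r\le k-1\) visits when no winner has appeared, or \(r\le k\) visits when one has. Every forwarded visit in it corresponds to a good not accepted at filter \(i\). Such a good may still reach \(i\) later, but only by fallback, and then it counts in \(U_i^{\mathrm{fb}}\). Either the good lies outside \(A_i\), so its value is at most \(b_i\), or it lies in \(A_i\) through fallback, so its value is counted in \(U_i\). If the group has a winner, that visit is accepted and contributes to \(X_i\).

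\textbf{Summing over scales.} This is the step I expect to be the main obstacle. Summing the leftover over all scales \(d\) naively gives at most \(k-1\) leftover visits per scale, and the number of scales is unbounded. The fix is to use the geometric spacing. The forwarded leftover visits at scale \(d\) number at most \(k-1\) and each is worth less than \(q^{d+1}\). If \(D\) is the largest scale that has a leftover forwarded good outside \(A_i\), that good gives \(b_i\ge q^D\). Summing over scales \(d\le D\) then yields at most \((k-1)\sum_{d\le D}q^{d+1}=(k-1)q^{D+1}\cdot\frac{q}{q-1}\).

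This geometric sum has to be absorbed. The choice \(q=\frac{3+\sqrt5}{2}\) satisfies \(q-1=\sqrt q\cdot\)const; in particular \(q/(q-1)=\) the golden ratio \(\phi\) and \(q=\phi^2\). I would aim to show that leftovers at scales \(d<D\) can be charged against the slack \(q^{d+1}-q^d\) in the bound for completed groups, or against \(b_i\), so that the total leftover is at most \((1+q(k-1))b_i\). If the geometric sum resists this, I would instead try to bound the leftover of each incomplete group by the maximal leftover value \(b_i\) scaled appropriately.

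Leftover goods that come back to \(i\) via fallback are charged to \(U_i\). Zero-value visits on the scale \(\bot\) contribute nothing. Adding the contributions from completed groups, accepted visits in incomplete groups, and leftovers then gives \(S_i\le(1+q(k-1))(X_i+U_i^{\mathrm{fb}}+b_i)=(1+q(k-1))(U_i+b_i)\), using \(U_i=X_i+U_i^{\mathrm{fb}}\).
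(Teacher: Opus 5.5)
There is a genuine gap, and it is at the step you flagged yourself. The parts before it are correct and match the paper: you charge a group containing a winner to its winning good \(w\), giving at most \(\bigl(1+q(k-1)\bigr)w\). You also note that forwarded goods in a record's unique unfinished group either return through fallback, and so are charged to \(U_i^{\mathrm{fb}}\), or end outside \(A_i\).

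The bound you actually write for the leftovers outside \(A_i\) is
\[
(k-1)\sum_{d\le D}q^{d+1}=(k-1)\frac{q^{D+2}}{q-1}\le(k-1)\frac{q^2}{q-1}\,b_i .
\]
With \(\phi\) the golden ratio and \(q=\phi^2\), the coefficient is \((k-1)\phi^3=(k-1)(q+\phi)\). This exceeds the allowed \(1+q(k-1)\) whenever \(k\ge2\), because \((k-1)\phi>1\); for \(k=2\) it is about \(4.24\) against \(3.62\).

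Neither fallback plan closes the gap. There may be no completed groups at all, so there is no slack to charge against. Bounding each unfinished group separately by a multiple of \(b_i\) gives a factor that grows with the number of scales.

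The missing idea is to avoid the crude bound \(q^{D+1}\) at the top scale. Let \(z>0\) be the largest \(i\)-value among the leftover goods outside \(A_i\), with \(q^D\le z<q^{D+1}\). The at most \(k-1\) such goods at scale \(D\) are each worth at most \(z\), not \(q^{D+1}\). Only the strictly lower scales are bounded geometrically: \(\sum_{r<D}q^{r+1}=\frac{q^{D+1}}{q-1}\le\frac{q}{q-1}\,z\). The leftover total is then at most
\[
(k-1)\Bigl(1+\frac{q}{q-1}\Bigr)z=q(k-1)z\le q(k-1)b_i .
\]
The identity \(1+\frac{q}{q-1}=q\) is exactly \(q^2-3q+1=0\), i.e. \(1+\phi=\phi^2\), and it is the reason for choosing \(q=\frac{3+\sqrt5}{2}\). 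Combining the pieces gives
\[
S_i\le\bigl(1+q(k-1)\bigr)X_i+U_i^{\mathrm{fb}}+q(k-1)b_i\le\bigl(1+q(k-1)\bigr)(U_i+b_i).
\]
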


\begin{proof}
Partition all groups of records \((i,d)\) according to whether their winning
position has appeared by the end of the input.

\emph{Groups whose winning position has appeared.}
Consider such a group and let \(w\) be the value of its winning good.
If \(w=0\), the group contributes zero to \(S_i\).  Otherwise, every observed
value in the group lies in the same scale interval as \(w\), and therefore
every other observed value is strictly smaller than \(qw\).  There are at
most \(k-1\) other observed positions, so the total value of the observed
positions in the group is at most
\(
    w+q(k-1)w=\bigl(1+q(k-1)\bigr)w.
\)
The winning good is assigned to \(i\).  This argument also covers an
unfinished current group whose winner has already appeared.  Summing over
all groups in this class gives a contribution at most
\[
\bigl(1+q(k-1)\bigr)X_i.
\]

\emph{Groups whose winning position has not appeared.}
Every such group must be the unique current unfinished group of its record:
a completed group contains all \(k\) positions and hence necessarily contains
its winner.  Therefore, for each value scale there is at most one such group,
it contains at most \(k-1\) observed positions, and every observed good in it
has been forwarded past filter \(i\).

Among these forwarded goods, those finally assigned to \(i\) can only return
through fallback, so their total \(i\)-value is at most
\(U_i^{\mathrm{fb}}\).

Now consider those forwarded goods that are finally outside \(A_i\).
If all have value zero, they contribute nothing.  Otherwise let \(z>0\) be
their largest \(i\)-value, and choose \(d\) with
\(q^d\le z<q^{d+1}\).  There are at most \(k-1\) such goods in scale \(d\),
each worth at most \(z\), and at most \(k-1\) in every lower scale \(r<d\),
each worth less than \(q^{r+1}\).  Hence their total value is at most
\begin{align*}
    (k-1)\left(z+\sum_{r=-\infty}^{d-1}q^{r+1}\right)
    &=(k-1)\left(z+\frac{q^{d+1}}{q-1}\right)\\
    &\le(k-1)\left(1+\frac{q}{q-1}\right)z\\
    &=q(k-1)z\\
    &\le q(k-1)b_i,
\end{align*}
where the equality in the penultimate line uses
\(q=2+1/(q-1)\).

Combining the two classes,
\[
\begin{aligned}
    S_i
    &\le
    \bigl(1+q(k-1)\bigr)X_i
    +U_i^{\mathrm{fb}}
    +q(k-1)b_i\\
    &\le
    \bigl(1+q(k-1)\bigr)U_i
    +q(k-1)b_i\\
    &\le
    \bigl(1+q(k-1)\bigr)(U_i+b_i).
\end{aligned}
\]
\end{proof}

The outside-good term \(b_i\) accounts for visits in unfinished groups
whose winning positions have not appeared and whose goods are ultimately
allocated elsewhere. There are at most \(k-1\) such visits per scale, so
the geometric sum over scales allows a single maximum outside good to
control their total value. It remains to control the lower tail of
\(S_i\). Since forwarding decisions within a group are dependent, we
first establish a product-moment inequality for one group and then apply
it successively through the filters.

\begin{lemma}[One-group product inequality]
\label[lemma]{lem:group}
Consider a group whose winning position \(W\) is uniform on \([k]\).
Suppose that its first \(r\) positions have appeared, where
\(0\le r\le k\), and let \(u_1,\ldots,u_r\in[0,1]\) be fixed. Then,
with the empty product interpreted as one,
\[
\E\!\left[
\prod_{\substack{1\le t\le r\\ t\,\textnormal{is forwarded}}} u_t
\right]
\le
\prod_{t=1}^r
\left(
\frac{1}{k}
+
\left(1-\frac{1}{k}\right)u_t
\right).
\]
\end{lemma}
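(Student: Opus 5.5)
Within a single group, position \(t\le r\) is forwarded exactly when \(t\ne W\). The expectation on the left can therefore be computed directly by conditioning on \(W\). Write \(P=\prod_{t=1}^r u_t\). If \(W=w\le r\), the forwarded product is \(\prod_{t\ne w,\,t\le r}u_t\). If \(W>r\), all \(r\) observed positions are forwarded and the product is \(P\). Since \(W\) is uniform on \([k]\),
\[
\E\Bigl[\prod_{t\text{ fwd}}u_t\Bigr]
=\frac1k\sum_{w=1}^r\prod_{t\ne w}u_t+\frac{k-r}{k}P.
\]
The task is to show that this is at most \(\prod_{t=1}^r\bigl(\frac1k+(1-\frac1k)u_t\bigr)\).

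I would expand the right-hand side by choosing, for each \(t\), either the term \(1/k\) or the term \((1-1/k)u_t\). For a set \(T\subseteq[r]\) of positions contributing the \(1/k\) factor, the term is
\[
k^{-|T|}(1-1/k)^{r-|T|}\prod_{t\notin T}u_t.
\]
All terms are nonnegative because \(u_t\in[0,1]\). Keep only the terms with \(|T|=0\) and \(|T|=1\). Their sum is
\[
(1-1/k)^rP+\frac1k(1-1/k)^{r-1}\sum_w\prod_{t\ne w}u_t.
\]
This is not immediately comparable to the left side, because \((1-1/k)^{r-1}<1\). So simply dropping terms is too lossy, and a better route is needed.

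I would instead argue by induction on \(r\), which I expect to be the main (though modest) step. Let \(L_r\) be the left side. The function \(L_r\) is multi-affine in \((u_1,\dots,u_r)\), and so is the right side. Hence their difference is affine in each \(u_t\) separately, and it suffices to check the inequality at the vertices \(u\in\{0,1\}^r\).

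At a vertex, let \(s\) be the number of coordinates with \(u_t=0\).
\begin{itemize}
\item If \(s=0\), both sides equal \(1\).
\item If \(s=1\), the left side equals \(1/k\), since the product is nonzero only when \(W\) is the unique zero position. The right side equals \(1/k\cdot1\). The two sides are equal.
\item If \(s\ge2\), the left side is \(0\), because at least one zero position is always forwarded. The right side equals \(k^{-s}\ge0\).
\end{itemize}
The inequality therefore holds at every vertex.

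The inequality also holds on the whole cube, by the following argument. View the difference \(D(u)=\text{RHS}-\text{LHS}\) as multi-affine on \([0,1]^r\). Fixing all coordinates but one, \(D\) is affine in that coordinate, so its minimum over that coordinate is attained at \(0\) or \(1\). Iterating over the coordinates shows that \(\min_{[0,1]^r}D\) is attained at a vertex. We showed \(D\ge0\) at every vertex, so \(D\ge0\) everywhere.

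The case \(r=0\) is trivial, since both sides are \(1\). This completes the plan. The only care needed is verifying multi-affinity, which holds because each \(u_t\) appears at most once in every product on either side.
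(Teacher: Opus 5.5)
Your proposal is correct and follows the paper's proof: condition on $W$ to get the explicit multi-affine expression, use coordinatewise affinity of the difference to reduce to the vertices of $[0,1]^r$, and check the cases with zero, one, or at least two zero coordinates. The abandoned term-expansion and the announced ``induction on $r$'' are never actually used, since the vertex argument alone completes the proof.
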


\begin{proof}
The case \(r=0\) is immediate. For \(r\ge1\), averaging over the
winning position gives
\begin{equation}\label{eq:group-moment}
\E\!\left[
\prod_{\substack{1\le t\le r\\ t\,\textnormal{is forwarded}}} u_t
\right]
=
\frac{1}{k}\sum_{w=1}^r
\prod_{\substack{1\le t\le r\\ t\ne w}}u_t
+
\frac{k-r}{k}\prod_{t=1}^r u_t.
\end{equation}
Indeed, if \(W=w\le r\), every observed position except \(w\) is
forwarded, while if \(W>r\), all \(r\) observed positions are forwarded.

It remains to compare \eqref{eq:group-moment} with the claimed
product bound. The difference between the right-hand side of the
claimed inequality and the right-hand side of
\eqref{eq:group-moment} is affine in each \(u_t\) separately.
Therefore, fixing all other coordinates, its minimum over
\(u_t\in[0,1]\) is attained at \(u_t\in\{0,1\}\).
Applying this argument successively to all coordinates, its minimum
over \([0,1]^r\) is attained at a vertex. It therefore suffices to
verify the inequality for \(u_t\in\{0,1\}\) for every \(t\in[r]\).

Consider a vertex of \([0,1]^r\) with exactly \(z\) zero coordinates.
In the claimed inequality, the product bound on the right equals \(k^{-z}\). 
The expectation on
the left equals \(1\) if \(z=0\). If \(z=1\), it equals \(1/k\), since
the product is nonzero exactly when the unique zero position is the
winner. If \(z\ge2\), at least one zero position is forwarded regardless
of the winner, so the expectation is \(0\). Hence the desired inequality
holds at every vertex.
\end{proof}

The right-hand side of \cref{lem:group} is the product moment obtained
if the observed positions were forwarded independently with probability
\(1-1/k\). Thus, although the actual forwarding decisions within a group
are dependent, the lemma bounds their product moment by the corresponding
independent expression. This bound will allow us to work backwards from
a target agent through the preceding filters.

Fix a target agent \(i\), \(s>0\), and a normalization \(\nu>0\).
For \(a\in[0,1]\), define
\[
    F_i(a):=1-p_i+p_i e^{-sa},
\]
and recursively, for \(j=i-1,i-2,\ldots,1\),
\begin{equation}\label{eq:osq-Fi-recursion}
    F_j(a)
    :=
    \frac{p_j}{k}
    +
    \left(1-\frac{p_j}{k}\right)F_{j+1}(a).
\end{equation}
Clearly \(F_j(a)\in[0,1]\).

The base function \(F_i(a)\) accounts for activation at the target
filter: an activated good contributes its value to \(S_i\), regardless
of whether its visit wins. For \(j<i\), the constant term \(p_j/k\)
accounts for acceptance at filter \(j\), after which the good contributes
nothing to \(S_i\), while the term \((1-p_j/k)F_{j+1}(a)\) accounts for
forwarding. The next lemma shows that these functions give a product
bound for the exponential moment of the total value reaching and
activating the target filter.

\begin{lemma}[Exponential-moment bound]\label[lemma]{lem:osq-mgf}
Fix a target agent \(i\), a parameter \(s>0\), and a normalization
\(\nu>0\).  Let \(J\) be any deterministic incoming sequence satisfying
\(v_i(g)\le\nu\) for every \(g\in J\).  Consider a fresh copy of the filters
\(j,j+1,\ldots,i\), with all their records initially empty, and let
\(T_j(J)\) denote the total \(i\)-value of goods that reach filter \(i\) and
activate \(i\).  Then
\begin{equation}\label{eq:osq-product}
    \E\left[e^{-sT_j(J)/\nu}\right]
    \le
    \prod_{g\in J}
    F_j\left(\frac{v_i(g)}{\nu}\right).
\end{equation}
In particular, for the full input \(J=M\), we have \(S_i=T_1(J)\).
With \(\nu_i=\max_{g\in J} v_i(g)>0\), this yields
\[
    \E\!\left[e^{-sS_i/\nu_i}\right]
    \le
    \exp\!\left(
        -\frac{kV_i}{(n+k-1)\nu_i}(1-e^{-s})
    \right).
\]
\end{lemma}

\begin{proof}
We prove \eqref{eq:osq-product} by backward induction on \(j\).

For \(j=i\), each incoming good independently activates \(i\) with probability
\(p_i\).  Since \(T_i(J)\) counts an activated good regardless of whether its
visit subsequently wins, the winning positions of filter \(i\) are irrelevant.
Writing \(a_g=v_i(g)/\nu\),
\[
\begin{aligned}
    \E\left[e^{-sT_i(J)/\nu}\right]
    &=
    \prod_{g\in J}
    \left(1-p_i+p_i e^{-sa_g}\right)=
    \prod_{g\in J}F_i(a_g).
\end{aligned}
\]

Now let \(j<i\) and assume the claim for \(j+1\).

\emph{Fix the activations and winners, then apply induction.}
First condition on the complete activation array \(\{B_j(g)\}_{g\in J}\)
at filter \(j\). This fixes the visiting goods, their records, all group
boundaries, and all positions within those groups, without exposing the
winning positions. Let \(\mathcal W\) denote the collection of
pre-sampled winning positions of all groups visited at filter \(j\).
Conditioning next on \(\mathcal W\) fixes the forwarded subsequence
\(J'\subseteq J\), in its original order, which still satisfies
\(v_i(g)\le\nu\) for every \(g\in J'\).
Since \(j<i\), we have \(T_j(J)=T_{j+1}(J')\).  The filters
\(j+1,\ldots,i\) have fresh independent randomness and initially empty
records, so the induction hypothesis gives
\[
    \E\left[
        e^{-sT_j(J)/\nu}
        \,\middle|\,
        \{B_j(g)\}_{g\in J},\mathcal W
    \right]
    \le
    \prod_{g\in J'}F_{j+1}(a_g).
\]

\emph{Average the winners and activations.}
Keep the activation array fixed. Every non-activated good is forwarded, while the activated goods are partitioned into fixed visited groups. Applying \cref{lem:group} to each group with
\(u_g=F_{j+1}(a_g)\in[0,1]\), including any unfinished group, and using the independence of their winning positions, we obtain
\[
\E\!\left[
    \prod_{g\in J'}F_{j+1}(a_g)
    \,\middle|\,
    \{B_j(g)\}_{g\in J}
\right]
\le
\prod_{\substack{g\in J\\B_j(g)=0}}F_{j+1}(a_g)
\prod_{\substack{g\in J\\B_j(g)=1}}
\left[
    \frac1k+
    \left(1-\frac1k\right)F_{j+1}(a_g)
\right].
\]
This bound factorizes over the incoming goods and no longer depends on
their record or group boundaries. We now average the conditional inequality
obtained from the induction hypothesis first over \(\mathcal W\), keeping
the activation array fixed and applying the bound above, and then over
the activation array. By the tower property and the independence of the
Bernoulli activations, we obtain
\[
\begin{aligned}
\E\left[e^{-sT_j(J)/\nu}\right]
&\le
\E_{B_j}\!\left[
\prod_{\substack{g\in J\\B_j(g)=0}}F_{j+1}(a_g)
\prod_{\substack{g\in J\\B_j(g)=1}}
\left(
    \frac1k+
    \left(1-\frac1k\right)F_{j+1}(a_g)
\right)
\right]\\
&=
\prod_{g\in J}
\left[
    (1-p_j)F_{j+1}(a_g)
    +p_j\left(
        \frac1k+
        \left(1-\frac1k\right)F_{j+1}(a_g)
    \right)
\right]\\
&=
\prod_{g\in J}
\left[
    \frac{p_j}{k}
    +\left(1-\frac{p_j}{k}\right)F_{j+1}(a_g)
\right]\\
&=
\prod_{g\in J}F_j(a_g),
\end{aligned}
\]
where the last equality follows from \eqref{eq:osq-Fi-recursion}. This proves \eqref{eq:osq-product}.

It remains to solve the recursion.  By the definition of
$F_i(a)$ and recurrence relation, we have
\[
    1-F_1(a)
    =
    p_i
    \left(
        \prod_{j=1}^{i-1}\left(1-\frac{p_j}{k}\right)
    \right)
    (1-e^{-sa}).
\]
Therefore
\[
    F_1(a)
    =
    1-\frac{k}{n+k-1}(1-e^{-sa}).
\]

For the full input \(M\), \(T_1(M)=S_i\). Applying
\eqref{eq:osq-product} with \(\nu=\nu_i\), we obtain
\[
\begin{aligned}
    \E\!\left[e^{-sS_i/\nu_i}\right]
    &\le
    \prod_{g\in M}
    \left[
        1-\frac{k}{n+k-1}
        \left(1-e^{-s v_i(g)/\nu_i}\right)
    \right]\\
    &\le
    \exp\!\left(
        -\frac{k}{n+k-1}
        \sum_{g\in M}
        \left(1-e^{-s v_i(g)/\nu_i}\right)
    \right)\\
    &\le
    \exp\!\left(
        -\frac{k}{n+k-1}(1-e^{-s})
        \sum_{g\in M}\frac{v_i(g)}{\nu_i}
    \right)\\
    &=
    \exp\!\left(
        -\frac{kV_i}{(n+k-1)\nu_i}(1-e^{-s})
    \right).
\end{aligned}
\]
Here the second inequality uses \(1-x\le e^{-x}\), and the third uses
\(1-e^{-sa}\ge a(1-e^{-s})\) for \(a\in[0,1]\), which follows from
the convexity of \(e^{-sa}\).
\end{proof}

We now combine the deterministic and probabilistic estimates.
\Cref{lem:certificate} converts a lower bound on \(S_i\) into a lower
bound on \(U_i+b_i\), while \cref{lem:osq-mgf} controls the lower tail
of \(S_i\). Applying Markov's inequality and then a union bound over the
agents yields a simultaneous PROP1 guarantee. The group size in the
following theorem depends only on \(n\), so the same algorithm provides
the stated bounds for every failure probability \(\delta\).

\begin{theorem}[High-probability PROP1 guarantee]
\label[theorem]{thm:osq-high-probability}
For all sufficiently large \(n\), set
\[
    k_n:=\max\{2,\lceil(\log n)^2\rceil\},\qquad
    s_n:=(\log n)^{-1/2},\qquad
    \eta_n:=
    \frac{n}{n+k_n-1}\,
    \frac{1-e^{-s_n}}{s_n}.
\]
For every fixed input and
every \(\delta\in(0,1)\),
\begin{equation}\label{eq:osq-uniform-hp}
    \Prb\left(
        \rho_{\mathrm{PROP1}}
        \ge
        \frac{\eta_n}
        {q+\tfrac{\log(n/\delta)}{(\log n)^{3/2}}}
    \right)
    \ge 1-\delta.
\end{equation}
\end{theorem}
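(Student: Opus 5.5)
We combine \cref{lem:certificate} with \cref{lem:osq-mgf} agent by agent, and finish with a union bound. Fix an agent \(i\) with \(V_i>0\); if \(V_i=0\), its score is \(h(\cdot,0)=1\). Put \(\nu_i=\max_g v_i(g)>0\) and write \(\alpha\) for the target threshold in \eqref{eq:osq-uniform-hp}.

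\emph{Large items.} If \(\nu_i\ge \alpha V_i/n\), agent \(i\) succeeds deterministically. Either the maximal good lies outside \(A_i\), so \(b_i\ge\nu_i\), or it lies inside, so \(U_i\ge\nu_i\). In both cases \(U_i+b_i\ge\nu_i\ge\alpha V_i/n\). Since \(\alpha<1/q<1\), this case is safe, and we may assume \(\nu_i<\alpha V_i/n\).

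\emph{Small items.} Failure of agent \(i\) means \(U_i+b_i<\alpha V_i/n\). By \cref{lem:certificate}, this forces
\[
S_i<(1+q(k-1))\alpha V_i/n\le qk\,\alpha V_i/n,
\]
using \(1+q(k-1)\le qk\). Apply Markov's inequality to \(e^{-sS_i/\nu_i}\) with \(s=s_n\) and the bound of \cref{lem:osq-mgf}. Writing \(L=V_i/(n\nu_i)\), this gives
\[
\Prb(\text{$i$ fails})\le \exp\!\Bigl(sqk\alpha L-\tfrac{nk}{n+k-1}(1-e^{-s})L\Bigr)
=\exp\!\bigl(-skL(\eta_n-q\alpha)\bigr).
\]
Here we used the definition \(\eta_n=\frac{n}{n+k-1}\frac{1-e^{-s}}{s}\). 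Substituting \(\alpha=\eta_n/(q+\lambda)\) with \(\lambda=\log(n/\delta)/(\log n)^{3/2}\) gives
\[
\eta_n-q\alpha=\lambda\alpha .
\]
The exponent is therefore \(-sk\lambda\,\alpha L\), and in this case \(\alpha L>1\) because \(\nu_i<\alpha V_i/n\). Hence
\[
\Prb(\text{$i$ fails})\le\exp(-s_nk_n\lambda).
\]
With \(k_n\ge(\log n)^2\) and \(s_n=(\log n)^{-1/2}\), we get \(s_nk_n\lambda\ge(\log n)^{3/2}\lambda=\log(n/\delta)\), so \(\Prb(\text{$i$ fails})\le\delta/n\).

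A union bound over the \(n\) agents then gives failure probability at most \(\delta\), which proves \eqref{eq:osq-uniform-hp}.

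The main obstacle is the bookkeeping that makes the Markov exponent collapse exactly to \(-sk\lambda\alpha L\). This requires matching the certificate constant \(1+q(k-1)\le qk\) against the factor \(\frac{k}{n+k-1}\) in \cref{lem:osq-mgf}, and then using the large-item dichotomy to guarantee \(\alpha L\ge1\) in the small-item case. I would check the algebra \(\eta_n-q\alpha=\lambda\alpha\) carefully, since everything hinges on it. I would also confirm that ``sufficiently large \(n\)'' is needed only to ensure \(k_n=\lceil(\log n)^2\rceil\) and \(s_n\) behave as stated, e.g.\ \(\eta_n>0\).
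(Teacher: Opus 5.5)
Your proof is correct and follows the paper's argument: the certificate lemma, Markov's inequality applied to \(e^{-s_nS_i/\nu_i}\) together with the exponential-moment bound, the large-item dichotomy, and a union bound over agents. The only difference is bookkeeping. The paper first proves the bound for the slightly sharper threshold \(k_ns_n\eta_n/\bigl(C_{k_n}s_n+\log(n/\delta)\bigr)\) and relaxes it afterwards via \(C_{k_n}/k_n\le q\) and \(k_ns_n\ge(\log n)^{3/2}\), whereas you absorb both relaxations inline.
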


\begin{proof}
Write
\[
    L:=\log(n/\delta),\qquad
    C_{k_n}:=1+q(k_n-1),
\]
and define
\begin{equation}\label{eq:osq-alpha}
    \alpha_{n,\delta}
    :=
    \frac{k_ns_n\eta_n}
         {C_{k_n}s_n+L}.
\end{equation}
We first show that
\(\Prb(\rho_{\mathrm{PROP1}}\ge\alpha_{n,\delta})\ge1-\delta\).

Fix an agent \(i\). If \(V_i=0\), her score is one. Otherwise, let
\(\nu_i=\max_g v_i(g)>0\). Since \(U_i+b_i\ge\nu_i\), a violation
is impossible when \(\nu_i\ge\alpha_{n,\delta}V_i/n\).
In the remaining case, \(\nu_i<\alpha_{n,\delta}V_i/n\). Since
\(k_ns_n\eta_n=\frac{nk_n}{n+k_n-1}(1-e^{-s_n})\),
\cref{lem:certificate}, Markov's inequality, and
\cref{lem:osq-mgf} give
\[
\begin{aligned}
    \Prb\left(
        U_i+b_i<\frac{\alpha_{n,\delta}V_i}{n}
    \right)
    &\le
    \Prb\left(
        S_i<\frac{C_{k_n}\alpha_{n,\delta}V_i}{n}
    \right)\\
    &=
    \Prb\left(
        e^{-s_nS_i/\nu_i}
        >\exp\left(
            -\frac{C_{k_n}s_n\alpha_{n,\delta}V_i}{n\nu_i}
        \right)
    \right)\\
    &\le
    \exp\left(
        \frac{C_{k_n}s_n\alpha_{n,\delta}V_i}{n\nu_i}
    \right)
    \E\!\left[e^{-s_nS_i/\nu_i}\right]\\
    &\le
    \exp\left(
        -\frac{V_i}{n\nu_i}
        \left(
            k_ns_n\eta_n
            -C_{k_n}s_n\alpha_{n,\delta}
        \right)
    \right)\\
    &=
    \exp\left(
        -\frac{\alpha_{n,\delta}V_i}{n\nu_i}L
    \right)
    \le e^{-L}
       =\frac{\delta}{n}.
\end{aligned}
\]
Here \eqref{eq:osq-alpha} gives
\(k_ns_n\eta_n-C_{k_n}s_n\alpha_{n,\delta}=\alpha_{n,\delta}L\),
and the last inequality uses \(\alpha_{n,\delta}V_i/(n\nu_i)>1\).
A union bound over the agents with positive total value gives
\[
    \Prb(\rho_{\mathrm{PROP1}}<\alpha_{n,\delta})
    \le
    \sum_{i:V_i>0}
    \Prb\left(U_i+b_i<\frac{\alpha_{n,\delta}V_i}{n}\right)
    \le
    \sum_{i:V_i>0}\frac{\delta}{n}
    \le\delta.
\]
This proves
\(\Prb(\rho_{\mathrm{PROP1}}\ge\alpha_{n,\delta})\ge1-\delta\).

Finally, \(C_{k_n}/k_n\le q\) and
\(k_ns_n\ge(\log n)^{3/2}\). Dividing the numerator and denominator
of \eqref{eq:osq-alpha} by \(k_ns_n\) therefore yields
\[
\begin{aligned}
    \alpha_{n,\delta}
    &=
    \frac{
        \eta_n
    }{
        \tfrac{C_{k_n}}{k_n}
        +\tfrac{L}{k_ns_n}
    }\ge
    \frac{\eta_n}{q+L/(\log n)^{3/2}}.
\end{aligned}
\]
This proves \eqref{eq:osq-uniform-hp}, uniformly over
\(\delta\) and the input.
\end{proof}

We now compare the high-probability guarantees of \(\OSQ\) and
\(\Rand\) at the same confidence level. For the chosen parameters,
\(k_n=O((\log n)^2)\) and \(s_n=(\log n)^{-1/2}\), so
\[
    \eta_n
    =
    \left(1-O\!\left(\frac{(\log n)^2}{n}\right)\right)
    \left(1-O\!\left(\frac1{\sqrt{\log n}}\right)\right)
    =1-O\!\left(\frac1{\sqrt{\log n}}\right).
\]
Hence \cref{thm:osq-high-probability} gives the high-probability factor
\[
    \Omega\!\left(
        \frac{1}
        {1+\tfrac{\log(n/\delta)}{(\log n)^{3/2}}}
    \right)
\]
with probability at least \(1-\delta\), for every
\(\delta\in(0,1)\) and every fixed input.
Prior work shows that, for every \(\delta\in(0,1/2]\), the largest
factor that \(\Rand\) guarantees with probability at least
\(1-\delta\) on every fixed input is
\(\Theta(1/\log(n/\delta))\)
\cite{ChooEtAl2026,NeohTeh2026}.
\(\OSQ\) therefore improves this guarantee by a factor of at least
\[
    \Omega\!\left(
        \frac{\log(n/\delta)}
             {1+\tfrac{\log(n/\delta)}{(\log n)^{3/2}}}
    \right)
    =
    \Omega\!\left(\min\{\log(n/\delta),(\log n)^{3/2}\}\right).
\]
Since \(\log(n/\delta)\ge\log n\), this gives an improvement of
\(\Omega(\log n)\) uniformly over \(\delta\in(0,1/2]\).

For polynomially small failure probabilities, the confidence term in
the denominator of \eqref{eq:osq-uniform-hp} tends to zero, while
\(\eta_n\) tends to one. The
following corollary makes the resulting asymptotic factor \(1/q\)
explicit and derives an expected realized guarantee from the same tail
bound.

\begin{corollary}[Constant high-probability and expected guarantees]
\label[corollary]{cor:osq-expected}
Consider \(\OSQ\) with \(k=k_n\) as in
\cref{thm:osq-high-probability}. For every fixed constant \(c>0\),
all sufficiently large \(n\), and every fixed input,
\[
    \Prb\left(
        \rho_{\mathrm{PROP1}}
        \ge
        \frac{3-\sqrt5}{2}
        -O_c\!\left(\frac1{\sqrt{\log n}}\right)
    \right)
    \ge1-n^{-c}.
\]
Moreover,
\[
    R_{\mathrm{PROP1}}^{\OSQ}(n)
    \ge
    \frac{3-\sqrt5}{2}
    -O\!\left(\frac1{\sqrt{\log n}}\right).
\]
\end{corollary}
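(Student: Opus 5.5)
The plan is to specialize \cref{thm:osq-high-probability} to \(\delta=n^{-c}\). Then
\[
L=\log(n/\delta)=(1+c)\log n,
\qquad
\frac{L}{(\log n)^{3/2}}=\frac{1+c}{\sqrt{\log n}} .
\]
By \eqref{eq:osq-uniform-hp}, with probability at least \(1-n^{-c}\) we have
\[
\rho_{\mathrm{PROP1}}\ \ge\ \frac{\eta_n}{q+(1+c)(\log n)^{-1/2}} .
\]
(For \(n\) large enough that \(n^{-c}<1\), this \(\delta\) is admissible.)

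Next I will expand this fraction. Since \(k_n\le(\log n)^2+1\), we get \(n/(n+k_n-1)\ge 1-(\log n)^2/n\). Using \(1-e^{-s}\ge s-s^2/2\), we get \((1-e^{-s_n})/s_n\ge 1-s_n/2\). Together these give
\[
\eta_n\ \ge\ 1-O\bigl((\log n)^{-1/2}\bigr),
\]
matching the display before the corollary. For the denominator, \(1/(q+x)\ge 1/q-x/q^2\) for \(x\ge0\). Since \(1/q=\frac{3-\sqrt5}{2}\), the bound becomes
\[
\rho_{\mathrm{PROP1}}\ \ge\ \frac{3-\sqrt5}{2}-O_c\bigl((\log n)^{-1/2}\bigr).
\]
The constant depends on \(c\) only through the term \((1+c)/q^2\). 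This proves the first claim.

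For the expected guarantee, I fix a single \(c\), say \(c=1\), and apply the tail-integral relation of \cref{subsec:randomized-guarantees}. With \(\alpha_n\) the threshold above for \(c=1\), every input satisfies
\[
\E[\rho_{\mathrm{PROP1}}]\ \ge\ \alpha_n\bigl(1-n^{-1}\bigr)\ \ge\ \alpha_n-\tfrac1n .
\]
The \(1/n\) loss is absorbed into \(O\bigl((\log n)^{-1/2}\bigr)\). The bound is uniform over inputs, so taking the infimum over \(I\) gives the stated lower bound on \(R^{\OSQ}_{\mathrm{PROP1}}(n)\).

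None of these steps is a real obstacle; the work is already done in \cref{thm:osq-high-probability}. The one point needing care is that every error term must be independent of the input. This holds because \(\eta_n\), \(k_n\) and \(s_n\) depend only on \(n\). It remains to check that the \((\log n)^2/n\) term coming from \(k_n\) is dominated by \((\log n)^{-1/2}\), which is clear.
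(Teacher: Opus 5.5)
Your proposal is correct and follows the paper's proof. You specialize \cref{thm:osq-high-probability} to $\delta=n^{-c}$, expand the threshold to $1/q-O_c\bigl((\log n)^{-1/2}\bigr)$, and then take $c=1$ and use $\E[\rho_{\mathrm{PROP1}}]\ge(1-n^{-1})\alpha_n$ uniformly over inputs; your explicit bounds on $\eta_n$ and $1/(q+x)\ge 1/q-x/q^2$ simply fill in the asymptotic step that the paper states without detail.
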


\begin{proof}
Set \(\delta=n^{-c}\) in
\cref{thm:osq-high-probability}. Then
\(\log(n/\delta)=(1+c)\log n\), and the threshold in
\eqref{eq:osq-uniform-hp} is
\[
    \frac{1-O((\log n)^{-1/2})}
         {q+(1+c)/\sqrt{\log n}}
    =
    \frac1q-O_c\!\left(\frac1{\sqrt{\log n}}\right).
\]
Since \(1/q=(3-\sqrt5)/2\), this proves the probability
claim. For the expected guarantee, take \(c=1\). The probability
bound is uniform over inputs, so nonnegativity of the realized score gives
\[
\begin{aligned}
    R_{\mathrm{PROP1}}^{\OSQ}(n)
    &=\inf_I\E\!\left[\rho_{\mathrm{PROP1}}(\OSQ,I)\right]\\
    &\ge
    \left(1-\frac1n\right)
    \left(
        \frac{3-\sqrt5}{2}
        -O\!\left(\frac1{\sqrt{\log n}}\right)
    \right)\\
    &=
    \frac{3-\sqrt5}{2}
    -O\!\left(\frac1{\sqrt{\log n}}\right).
\end{aligned}
\]
\end{proof}

\subsection{Expected Realized PROP1 for \(\Rand\)}

We next compare the algorithms under the expected realized criterion.
Existing results for \(\Rand\) imply an expected realized
PROP1 guarantee of order \(\Theta(1/\log n)\)
\cite{ChooEtAl2026,NeohTeh2026}. The following proposition determines
its leading constant, giving a sharp benchmark for comparison with the
constant guarantee of \(\OSQ\).

\begin{proposition}\label[proposition]{prop:rand-prop1}
As \(n\to\infty\), the expected realized \emph{PROP1} guarantee of \(\Rand\) satisfies
\[
 R_{\mathrm{PROP1}}^{\mathrm{Rand}}(n)=\frac{1+o(1)}{\log n}.
\]
\end{proposition}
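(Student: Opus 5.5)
We prove matching upper and lower bounds on \(R_{\mathrm{PROP1}}^{\Rand}(n)\), both with leading term \(1/\log n\).

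\emph{Upper bound.} We exhibit a hard input. Take \(m=N\) goods, where \(N=n\lceil\log n\rceil\), say, or any \(N\to\infty\) that is polynomial in \(n\). Agent \(1\) values every good at \(1\), and all other agents value every good at \(0\), so their scores equal one. Then \(V_1=N\). Agent \(1\) receives \(U_1\sim\mathrm{Bin}(N,1/n)\) goods, and \(b_1=1\) whenever she misses some good. Hence
\[
\rho=h\bigl(U_1+1,\,N/n\bigr)=\min\{1,\,n(U_1+1)/N\}.
\]
Choose \(N=n\,\theta\log n\) with \(\theta\to\infty\) slowly. Then \(\E[U_1]=\theta\log n\), and by Chernoff \(U_1\) concentrates, so \(\E[\rho]\approx 1\). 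That input is therefore too easy.

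The extremal instances must instead make the lower tail bite. Use \(n-1\) agents (or all \(n\)), each of whom values a private set of about \(\theta\) goods among otherwise identical goods. The intended hard instance is the one from \cite{ChooEtAl2026}: each agent \(i\) values \(T\) goods at \(1\), with disjoint sets for different agents. Agent \(i\) then gets \(\mathrm{Bin}(T,1/n)\) of her goods and needs \(\alpha T/n\). Take \(T=n\lambda\log n\), so each count has mean \(\mu=\lambda\log n\). The score is \(\min_i n(U_i+1)/T\approx\min_i U_i/\mu\). The minimum of \(n\) nearly independent binomials with mean \(\lambda\log n\) is about \(x\lambda\log n\), where \(x\) solves \(\lambda(x\log x-x+1)=1\). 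Letting \(\lambda\to0\) makes the minimum typically \(0\), giving score \(n/T=1/(\lambda\log n)\). Sharper still, with \(\mu=\log n-\omega\) for a slowly growing \(\omega\), some agent gets zero with probability \(\to1\), since \(n e^{-\mu}\to\infty\), and Poissonization handles the weak dependence. The score is then \(1/\mu=(1+o(1))/\log n\) with probability \(1-o(1)\). Since scores lie in \([0,1]\), the upper tail inequality \(\E[X]\le\alpha+(1-\alpha)\Prb(X>\alpha)\) gives \(\E[\rho]\le(1+o(1))/\log n\).

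\emph{Lower bound.} Fix an input, and for each agent normalize \(\nu_i=\max_g v_i(g)\). If \(V_i/(n\nu_i)\le(1-\epsilon)\log n\), then trivially \(U_i+b_i\ge\nu_i\ge V_i/(n(1-\epsilon)\log n)\) whenever she misses her top good. If she receives her top good, then \(U_i\ge\nu_i\) as well, so this case is also covered. For agents with \(V_i/(n\nu_i)>(1-\epsilon)\log n\), apply an exponential-moment bound as in \cref{lem:osq-mgf}, with \(k=1\)-type product \(\prod_g(1-\tfrac1n(1-e^{-sa_g}))\), giving
\[
\E\bigl[e^{-sU_i/\nu_i}\bigr]\le\exp\!\bigl(-\mu_i(1-e^{-s})\bigr),\qquad \mu_i=V_i/(n\nu_i).
\]
Taking \(s\to\infty\), we get \(\Prb\bigl(U_i<\tfrac{c}{\log n}\cdot V_i/n\bigr)\le n^{-(1-\epsilon)(1-o(1))}\) for small thresholds, after optimizing \(s\) of order \(\log\log n\). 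A union bound over the agents fails by a factor of \(n^{\epsilon}\). This is the main obstacle: the union bound alone only yields \(\Theta(1/\log n)\) with the wrong constant. To fix it, use the threshold \(\alpha=(1-\epsilon)/\log n\), for which every agent satisfies \(U_i+b_i\ge\nu_i\ge\alpha V_i/n\) unless \(\mu_i>(1-\epsilon)^{-1}\cdots\). More carefully, set \(\alpha=1/((1+\epsilon)\log n)\). Agents with \(\mu_i\le(1+\epsilon)\log n\) are always fine, because \(U_i+b_i\ge\nu_i\). Agents with \(\mu_i>(1+\epsilon)\log n\) fail only if \(U_i<\nu_i\cdot\alpha\mu_i\). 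Bounding instead \(\Prb(U_i+b_i<\alpha\mu_i\nu_i)\) with the Markov–MGF estimate at large \(s\) gives roughly \(\exp(-\mu_i(1-o(1)))\le n^{-(1+\epsilon)(1-o(1))}\). The union bound then gives failure probability \(o(1)\), so \(\E[\rho]\ge\alpha(1-o(1))\). Letting \(\epsilon\to0\) slowly completes the proof. The delicate part is choosing \(s\) so that the \(e^{s\alpha\mu_i}\) Markov factor is negligible; this requires \(s\alpha\mu_i=o(\mu_i)\), i.e. \(s=o(\log n)\) while \(e^{-s}=o(1)\). For example, \(s=\sqrt{\log n}\) works.
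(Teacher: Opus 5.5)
Your lower bound is correct and is essentially the paper's argument. Agents with $\mu_i=V_i/(n\nu_i)\le(1+\epsilon)\log n$ are handled deterministically by $U_i+b_i\ge\nu_i$. The remaining agents are handled by the Markov--MGF estimate $\exp\bigl(-\mu_i(1-e^{-s}-s\alpha)\bigr)$ and a union bound. With your choices $\alpha=1/((1+\epsilon)\log n)$ and $s=\sqrt{\log n}$, this works provided $\epsilon$ tends to zero more slowly than $(\log n)^{-1/2}$ (e.g.\ $\epsilon=(\log n)^{-1/4}$), so that $n^{1-(1+\epsilon)(1-e^{-s}-s\alpha)}\to0$. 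The paper instead takes $\beta=(1-e^{-s})/(s+\log(n/\delta))$, which makes the exponent exactly $-\beta\mu_i\log(n/\delta)$, with $s=\log\log n$ and $\delta=1/\log n$.

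The gap is in the last step of the upper bound. You use the same disjoint-block instance as the paper. But you infer $\E[\rho]\le(1+o(1))/\log n$ from ``some agent receives none of her goods with probability $1-o(1)$'' via $\E[X]\le\alpha+(1-\alpha)\Prb(X>\alpha)$. With $\alpha=1/\mu\approx1/\log n$, a bad-event probability that is merely $o(1)$ gives only $\E[\rho]\le1/\mu+o(1)$, which says nothing at scale $1/\log n$. Since you bound the score by one on the bad event, you need $\Prb(X_i>0\text{ for all }i)=o(1/\log n)$. This fails for some slowly growing $\omega$. Indeed $np_0=(1+o(1))e^{\omega}$, so the bad event has probability about $\exp(-e^{\omega})$, which is not $o(1/\log n)$ if, say, $e^{\omega}\le\log\log n$. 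The fix is to pin $\omega$ down, e.g.\ $\mu=\lfloor\log n-2\log\log n\rfloor$ as in the paper. Then still $1/\mu=(1+o(1))/\log n$, while $-\log(1-1/n)\le1/(n-1)$ gives $np_0\ge(\log n)^2/2$, so the bad event has probability at most $\exp(-(\log n)^2/2)$. Also, no Poissonization is needed. Each $X_i$ depends only on the recipient draws of the goods in block $i$, so the $X_i$ are exactly independent and $(1-p_0)^n\le e^{-np_0}$ directly.
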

\begin{proof}

\emph{Lower bound.}
For \(s>0\) and \(\delta\in(0,1)\), set
\[
 \beta=\frac{1-e^{-s}}{s+\log(n/\delta)}.
\]
Since \(0<1-e^{-s}<s\), we have \(0<\beta<1\).
Fix an input and a positive-total agent \(i\), and write
\(\nu_i=\max_g v_i(g)>0\).
Under \(\Rand\), the ownership indicators for agent \(i\) are independent
Bernoulli variables with parameter \({1}/{n}\). Thus
\[
 \E\!\left[e^{-sU_i/\nu_i}\right]
 =\prod_g\left(1-\frac{1}{n}+\frac{e^{-sv_i(g)/\nu_i}}{n}\right)
 \le \exp\!\left(-\frac{(1-e^{-s})V_i}{n\nu_i}\right).
\]
To justify the inequality, convexity gives
\(e^{-sa}\le1-(1-e^{-s})a\) for \(a\in[0,1]\).
Apply this to \(a={v_i(g)}/{\nu_i}\), then use \(1-x\le e^{-x}\)
on each factor and sum the values over goods.

As before, \(U_i+b_i\ge\nu_i\), so a violation of
\(\beta\)-PROP1 is impossible when \(\nu_i\ge{\beta V_i}/{n}\).
Otherwise that violation implies \(U_i<{\beta V_i}/{n}\).
Passing to the decreasing exponential and applying Markov's inequality gives
\begin{align*}
 \Prb\left(U_i+b_i<\frac{\beta V_i}{n}\right)
 &\le \Prb\left(e^{-sU_i/\nu_i}
       >\exp\!\left(-\frac{s\beta V_i}{n\nu_i}\right)\right)\\
 &\le \exp\!\left(\frac{s\beta V_i}{n\nu_i}\right)
       \E\!\left[e^{-sU_i/\nu_i}\right]\\
 &\le \exp\!\left(-\frac{V_i}{n\nu_i}(1-e^{-s}-s\beta)\right)\\
 &=\exp\!\left(-\frac{\beta V_i}{n\nu_i}\log\left(\frac{n}{\delta}\right)\right)
 \le \frac{\delta}{n}.
\end{align*}
The equality uses the definition of \(\beta\), and the last inequality
uses \(\frac{\beta V_i}{n\nu_i}>1\).
Zero-total agents automatically have score one.
A union bound gives
\(\Prb(\rho_{\mathrm{PROP1}}\ge\beta)\ge1-\delta\), hence
\[
 R_{\mathrm{PROP1}}^{\mathrm{Rand}}(n)
 \ge(1-\delta)\frac{1-e^{-s}}{s+\log(n/\delta)}.
\]
For sufficiently large \(n\), choose \(s=\log\log n\) and
\(\delta={1}/{\log n}\). Then
\[
 R_{\mathrm{PROP1}}^{\mathrm{Rand}}(n)
 \ge\frac{(1-1/\log n)^2}{\log n+2\log\log n}
 =\frac{1-o(1)}{\log n}.
\]

\emph{Upper bound.}
For sufficiently large \(n\), let
\(\ell=\lfloor\log n-2\log\log n\rfloor\ge1\).
Create disjoint blocks of \(n\ell\) goods, one block per agent.
Agent \(i\) values every good in her own block at one and all other
goods at zero. Fix any arrival order of these blocks.
Then \({V_i}/{n}=\ell\).
Her own valued-good count \(X_i\) under \(\Rand\) follows a binomial distribution, which is
\(X_i\sim\operatorname{Bin}\left(n\ell,{1}/{n}\right)\).
The \(X_i\)'s are independent because they depend on recipient draws
for disjoint sets of goods. For \(n\ge2\),
\[
 -\log\left(1-\frac{1}{n}\right)=\int_0^{\frac{1}{n}}\frac{dt}{1-t}
 \le\frac{1}{n-1}.
\]
Consequently, writing \(p_0:=\Prb(X_i=0)=\left(1-{1}/{n}\right)^{n\ell}\), we obtain
\begin{align*}
 np_0
 &=\exp\!\left(\log n+n\ell\log\!\left(1-\frac{1}{n}\right)\right)\\
 &\ge\exp\!\left(\log n-\frac{n\ell}{n-1}\right)\\
 &\ge\exp\!\left(\log n-\frac{n}{n-1}
                      \bigl(\log n-2\log\log n\bigr)\right)\\
 &\ge\exp\!\left(2\log\log n-\frac{\log n}{n-1}\right)\\
 &=(\log n)^2\cdot\exp\!\left(-\frac{\log n}{n-1}\right)
 \ge\frac{(\log n)^2}{2}.
\end{align*}
The first inequality uses the logarithmic bound above, and the second uses
\(\ell\le\log n-2\log\log n\). The third discards the nonnegative term
\(\frac{2\log\log n}{n-1}\) in the exponent, for sufficiently large \(n\).
The final inequality uses \(\frac{\log n}{n-1}\le\log2\),
which follows from \(n\le2^{n-1}\) for every integer \(n\ge2\).
If some \(X_i=0\), then \(U_i=0\) and \(b_i=1\), so that agent's
PROP1 score is \({1}/{\ell}\), and the simultaneous score is at most
\({1}/{\ell}\). Independence gives
\[
 \Prb(X_i>0\text{ for all }i)=(1-p_0)^n
 \le e^{-np_0}\le \exp\!\left(-\frac{(\log n)^2}{2}\right).
\]
Bounding the score by one on this latter event yields
\[
 \E[\rho_{\mathrm{PROP1}}]\le\frac{1}{\ell}+\exp\!\left(-\frac{(\log n)^2}{2}\right)
 =\frac{1+o(1)}{\log n}.
\]
This fixed-input family bounds the infimum from above and completes
the proof.
\end{proof}

By \cref{cor:osq-expected},
the expected realized PROP1 guarantee of \(\OSQ\) is at least
\(\frac{3-\sqrt5}{2}-o(1)\). \Cref{prop:rand-prop1} shows that the
corresponding guarantee of \(\Rand\) is \((1+o(1))/\log n\). Hence
\(\OSQ\) improves the expected realized PROP1 guarantee
of \(\Rand\) by a factor of at least
\(\bigl(\frac{3-\sqrt5}{2}-o(1)\bigr)\log n\).

%% file: sections/impossibility.tex
\section{Limitations for Envy-Based Fairness}\label{sec:impossibility}

We establish limitations on high-probability and expected realized
envy-based fairness for every randomized algorithm in our model.
For every such algorithm and every positive EFX factor, the success
probability can be made arbitrarily small on a suitable fixed input.
Consequently, the optimal expected realized EFX guarantee is zero.
For EF1, no such algorithm can guarantee a positive factor with
confidence exceeding \(\frac{n+1}{2n}\). The optimal expected realized
EF1 guarantee is also at most \(\frac{n+1}{2n}\).

The proofs use finite families of fixed inputs for a non-adaptive adversary.
For a fixed target factor \(\alpha>0\), we bound the average success
of every deterministic rule over such a family, and then average this
bound over the random tape of an arbitrary randomized algorithm,
following the standard argument underlying Yao's
principle~\cite{Yao1977}.
A fixed input with small success probability is then selected from
the family. The expected bounds follow because every realized score
lies in \([0,1]\).

\subsection{EFX Impossibility}


\begin{theorem}[EFX success probabilities]\label{thm:efx-success}
For every \(n\ge2\), every randomized fully online algorithm
\(\mathcal A\), every \(\alpha\in(0,1]\), and every \(p\in(0,1)\),
there exists a fixed finite input \(I\), independent of the algorithm's random choices,
such that
\[
 \Prb_{\mathcal A}(\rho_{\mathrm{EFX}}(\mathcal A,I)\ge\alpha)<p.
\]
Moreover, \(I\) can be chosen to have identical valuations across all agents.
\end{theorem}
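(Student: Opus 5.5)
The plan is to use a Yao-type averaging argument over the prefixes of one fixed sequence of goods with identical, rapidly increasing values. Fix \(\alpha\in(0,1]\) and \(p\in(0,1)\). Choose a real \(K\ge 1+2/\alpha\) and an integer \(N>n/p\). Let \(g_1,\dots,g_N\) be goods with \(v_i(g_t)=K^t\) for every agent \(i\), so valuations are identical. For \(T\in[N]\), let \(I_T=(g_1,\dots,g_T)\). Each \(I_T\) is a fixed finite input, and \(I\) will be one of them.

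The key deterministic step is the following claim. Consider any allocation of \(g_1,\dots,g_T\) in which the agent \(j\) receiving \(g_T\) already held some earlier good \(g_t\) with \(t<T\). Then its EFX score is below \(\alpha\). Indeed, \(g_T\) is the strictly most valuable good, so removing the least valuable good of \(A_j\) leaves at least \(K^T\), giving \(d^{\mathrm{EFX}}_{ij}\ge K^T\). Since \(n\ge2\), some agent \(i\ne j\) exists, and it holds only goods with index below \(T\). Hence
\[
U_i\le\sum_{t<T}K^t<\frac{K^T}{K-1}\le\frac{\alpha}{2}K^T,
\]
so \(h(U_i,d^{\mathrm{EFX}}_{ij})<\alpha\). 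Consequently, \(\rho_{\mathrm{EFX}}\ge\alpha\) on \(I_T\) requires \(g_T\) to go to an agent whose bundle was empty just before time \(T\). Each such event consumes an empty agent, and there are only \(n\) agents. So along any single sequence of recipients, at most \(n\) indices \(T\) can satisfy this requirement.

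Next I transfer this to a randomized algorithm \(\mathcal A\). Fix a realization \(\omega\) of its random tape. Because \(\mathcal A\) is online, its allocation on \(I_T\) under \(\omega\) is exactly the first \(T\) decisions of its run on \(I_N\) under \(\omega\). By the claim, at most \(n\) values of \(T\) give \(\rho_{\mathrm{EFX}}(A^{\mathcal A}(I_T))\ge\alpha\) under \(\omega\). Averaging over \(\omega\) gives
\[
\frac1N\sum_{T=1}^N\Prb_{\mathcal A}\bigl(\rho_{\mathrm{EFX}}(\mathcal A,I_T)\ge\alpha\bigr)\le\frac nN<p.
\]
Hence some \(T\) satisfies \(\Prb_{\mathcal A}(\rho_{\mathrm{EFX}}(\mathcal A,I_T)\ge\alpha)<p\), and we take \(I=I_T\). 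This input is chosen from \(\mathcal A\) alone, not from its random choices, and it has identical valuations.

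The main obstacle is coupling the prefix inputs correctly. The argument needs the decisions on \(I_T\) to coincide with the first \(T\) decisions on \(I_N\) for the same random tape. This holds because the algorithm knows neither \(m\) nor future values, but it should be stated carefully so that the per-tape counting bound applies. The inequality in the claim is a routine choice of \(K\).
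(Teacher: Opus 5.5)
Your proposal is correct and follows essentially the same route as the paper. Both arguments use prefixes of a single identical-valuation, superincreasing sequence, and both observe that only ``fresh'' allocation times (recipient previously empty) can have EFX score at least \(\alpha\), so at most \(n\) prefixes succeed per random tape. Averaging over the tape then produces a prefix with success probability below \(p\). The only cosmetic difference is that you use geometric values \(K^t\) with \(K\ge 1+2/\alpha\), whereas the paper imposes \(s_t\ge\varepsilon^{-1}\sum_{h<t}s_h\) for some \(\varepsilon<\alpha\).
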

\begin{proof}
Fix a randomized algorithm \(\mathcal A\),
\(\alpha\in(0,1]\), and \(p\in(0,1)\). Choose
\(0<\varepsilon<\alpha\le1\) and an integer \(K>\frac{n}{p}\).
Choose common values \(s_1,\ldots,s_K>0\), starting with \(s_1=1\)
and satisfying
\[
 s_t\ge\varepsilon^{-1}\sum_{h=1}^{t-1}s_h
 \quad (2\le t\le K).
\]
Before running the algorithm, choose \(T\) uniformly from \([K]\)
and present \(I_T=(s_1,\ldots,s_T)\), with every agent assigning
value \(s_t\) to good \(t\). These \(K\) fixed finite inputs form the family
\(\mathcal I_K=\{I_1,\dots,I_K\}\).

First consider any deterministic rule \(D\).
Its decisions on each \(I_T\) agree with the first \(T\) decisions
of its run on \(I_K\), because the horizon is not announced and
the revealed prefixes are identical.
Call time \(t\) fresh if its recipient was empty just before receiving
good \(t\). There are at most \(n\) fresh times on this common run
because each one changes one empty bundle into a nonempty bundle, and
irrevocability prevents that bundle from becoming empty again.

At a nonfresh time \(t\), let \(j\) receive the current good and put
\(P_{t-1}=\sum_{h<t}s_h\). Its bundle contains at least one older good.
Since \(\varepsilon<1\), we have \(s_t>P_{t-1}\), so the current
good is strictly larger than every older good. In particular,
a least-valued good in \(A_j^t\) is older.
Removing such a good leaves the current good, and therefore
\[
 d_{ij}^{\mathrm{EFX}}(A^t)\ge s_t\qquad(i\ne j).
\]
Every other agent owns only older goods, so
\(v_i(A_i^t)\le P_{t-1}\). There is at least one such agent because
\(n\ge2\). Its directed comparison with \(j\) gives
\[
 \rho_{\mathrm{EFX}}(A^t)
 \le \frac{P_{t-1}}{s_t}\le\varepsilon.
\]
Since \(\varepsilon<\alpha\), only fresh times can have score at
least \(\alpha\). There are at most \(n\) such times, so
\[
 \frac{1}{K}\sum_{T=1}^K
 \ind\{\rho_{\mathrm{EFX}}(D,I_T)\ge\alpha\}
 \le\frac{n}{K}.
\]
This counts successes on the same fixed list of inputs for every
deterministic rule.

Now fix the entire random tape of \(\mathcal A\), obtaining a
deterministic rule, and apply this inequality. Averaging over tapes
and interchanging the finite sum with expectation gives
\[
 \frac{1}{K}\sum_{T=1}^K
 \Prb_{\mathcal A}(\rho_{\mathrm{EFX}}(\mathcal A,I_T)\ge\alpha)
 \le\frac{n}{K}<p.
\]
Consequently, at least one fixed input in \(\mathcal I_K\) has
success probability at most this average. This input is selected
without observing the realized random tape, which proves the
theorem against a non-adaptive adversary.
\end{proof}

The bad fixed prefix is selected after averaging over the random
tape, not by observing one execution and choosing when to stop it.
All values in the construction are strictly positive, so the proof
also applies to EFX conventions that require the condition only for
positively valued goods.

\begin{corollary}[Zero expected EFX]
\label[corollary]{cor:efx-expected}
For every \(n\ge2\) and every randomized fully online algorithm
\(\mathcal A\),
\[
R_{\mathrm{EFX}}^{\mathcal A}(n)=0,
\]
even when the infimum is restricted to inputs with identical valuations.
\end{corollary}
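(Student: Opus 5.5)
The plan is to combine \cref{thm:efx-success} with the upper half of the tail-integral relation from \cref{subsec:randomized-guarantees}. Fix \(n\ge2\) and a randomized fully online algorithm \(\mathcal A\). Since every realized score lies in \([0,1]\), we have \(R_{\mathrm{EFX}}^{\mathcal A}(n)\ge0\) trivially. It therefore suffices to show that for every \(\epsilon>0\) there is an input \(I\) with identical valuations such that \(\E_{\mathcal A}[\rho_{\mathrm{EFX}}(\mathcal A,I)]<\epsilon\).

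Given \(\epsilon\in(0,1)\), set \(\alpha=\epsilon/2\) and \(p=\epsilon/2\). \Cref{thm:efx-success} then supplies a fixed finite input \(I\) with identical valuations across agents such that
\[
\Prb_{\mathcal A}\bigl(\rho_{\mathrm{EFX}}(\mathcal A,I)\ge\alpha\bigr)<p .
\]
Writing \(X=\rho_{\mathrm{EFX}}(\mathcal A,I)\), we apply the inequality
\[
\E[X]\le\alpha+(1-\alpha)\Prb(X>\alpha).
\]
Since \(\Prb(X>\alpha)\le\Prb(X\ge\alpha)<p\), this gives \(\E[X]<\alpha+p=\epsilon\). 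Taking the infimum over inputs, restricted to those with identical valuations, yields \(0\le R_{\mathrm{EFX}}^{\mathcal A}(n)\le\epsilon\). Because \(\epsilon\) was arbitrary, the conclusion follows.

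There is no real obstacle here. The only care needed is to confirm that \(\alpha\) and \(p\) lie in the ranges \((0,1]\) and \((0,1)\) required by \cref{thm:efx-success}, and to use the theorem's clause guaranteeing identical valuations, so that the restricted infimum is also zero.
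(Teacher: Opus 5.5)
Your proposal is correct and matches the paper's proof. Both apply \cref{thm:efx-success} with small $\alpha$ and $p$, then bound the expectation by $\alpha+p$ using only that scores lie in $[0,1]$, and finally send the parameters to zero. Your explicit appeal to the theorem's identical-valuations clause is exactly what justifies the restricted-infimum part of the statement.
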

\begin{proof}
Fix a randomized algorithm \(\mathcal A\) and
\(\alpha,p\in(0,1)\). On the input supplied by
\cref{thm:efx-success}, write \(\rho=\rho_{\mathrm{EFX}}(\mathcal A,I)\).
Because \(0\le\rho\le1\),
\[
 \E[\rho]
 \le\alpha\Prb(\rho<\alpha)+\Prb(\rho\ge\alpha)
 \le\alpha+p.
\]
Consequently \(R_{\mathrm{EFX}}^{\mathcal A}(n)\le\alpha+p\).
Letting \(\alpha,p\rightarrow 0\) proves the corollary.
\end{proof}

\subsection{An EF1 Upper Bound}

For every randomized online algorithm and every positive EF1
factor, we construct a fixed input on which the success probability is
at most \(\frac{n+1}{2n}\). The expected bound will follow by taking
the factor arbitrarily small.

\begin{theorem}[EF1 success probabilities]\label{thm:ef1-success}
For every \(n\ge2\), every randomized fully online algorithm
\(\mathcal A\), and every \(\alpha\in(0,1]\), there exists a fixed
finite input \(I\), independent of the algorithm's random choices,
such that
\[
 \Prb_{\mathcal A}(\rho_{\mathrm{EF1}}(\mathcal A,I)\ge\alpha)
 \le\frac{n+1}{2n}.
\]
\end{theorem}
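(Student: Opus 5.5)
The plan is a Yao-type averaging argument over a small family of fixed inputs, as in the proof of \cref{thm:efx-success}. Fix \(\mathcal A\) and \(\alpha\in(0,1]\). I would build a family consisting of a common prefix input \(P\) together with \(n\) extensions \(E_1,\dots,E_n\) of \(P\). The prefix gets weight \(\tfrac12\) and each extension gets weight \(\tfrac1{2n}\). The target is the following statement about every deterministic online rule \(D\):
\[
\tfrac12\,\ind\{\rho_{\mathrm{EF1}}(D,P)\ge\alpha\}+\tfrac1{2n}\sum_{r=1}^n\ind\{\rho_{\mathrm{EF1}}(D,E_r)\ge\alpha\}\le\frac{n+1}{2n}.
\]
Fixing the random tape of \(\mathcal A\) and averaging then bounds the weighted average of the success probabilities by \(\frac{n+1}{2n}\). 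Hence some member of the family, chosen without looking at the tape, has success probability at most \(\frac{n+1}{2n}\).

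\emph{The prefix.} Let \(P\) consist of \(n\) goods \(g_1,\dots,g_n\) that every agent values at \(1\). If some agent receives two of them, some other agent \(i\) has \(U_i=0\) while \(d^{\mathrm{EF1}}_{ij}\ge1\), so the score is \(0<\alpha\). Thus success on \(P\) forces the prefix allocation to be a bijection \(\pi\) between goods and agents. Because the horizon is unknown, \(D\) makes the same first \(n\) decisions on every \(E_r\) as on \(P\). This is the same prefix-consistency used in \cref{thm:efx-success}.

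\emph{Case split for the displayed inequality.} If \(D\) fails on \(P\), the left side is at most \(\tfrac12\le\frac{n+1}{2n}\) trivially. If \(D\) succeeds on \(P\), the prefix allocation is some bijection \(\pi\), and it suffices to show that \(D\) succeeds on at most one extension \(E_r\). For this I would design \(E_r\) so that it appends goods whose valuations depend on \(r\). The appended goods should make any allocation of them fail EF1 by a factor below \(\alpha\), unless the prefix bijection satisfies a condition \(C_r\), say ``agent \(1\) holds \(g_r\)''. The conditions \(C_1,\dots,C_n\) are pairwise exclusive, so at most one extension can succeed.

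\emph{Main obstacle.} The hard step is constructing the appended goods. The difficulty is that \(D\) sees \(r\) before allocating them, so the failure must be forced entirely by the already committed bijection. The appended valuations may depend on \(r\), but the values of the prefix goods are frozen.

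My first attempt would be as follows. In \(E_r\), append a few goods of huge value \(H\gg1/\alpha\) to agent \(1\) only, with value \(0\) to everyone else. I would also arrange, through tiny positive perturbations of agent \(1\)'s prefix values that are fixed in advance, that agent \(1\)'s EF1 comparison is tight exactly when she holds the distinguished good \(g_r\). If that fails, I would try a fallback. Lengthen the prefix so that success on \(P\) pins down more structure than a bijection, for example using a final prefix good valued \(2\) that can go to any one of \(n\) agents. Then let \(E_r\) punish every choice except ``the \(2\)-valued good went to agent \(r\)''. Either way, each construction must be checked against all allocations of the appended goods, and I expect this case analysis to be the delicate part.
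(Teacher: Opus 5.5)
Your averaging reduction is fine: prefix consistency, the weights \(\tfrac12\) and \(\tfrac1{2n}\), and the fact that success on \(P\) forces a bijection. But the whole content of the theorem lies in the step you leave open, and in the form you describe it cannot be carried out.

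Your first attempt fails at once. Goods worth \(H\) to agent \(1\) and \(0\) to everyone else can simply be given to agent \(1\). This raises her utility and leaves every denominator \(d_{ij}^{\mathrm{EF1}}\) unchanged, so any rule that succeeds on \(P\) also succeeds on every \(E_r\).

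More fundamentally, no choice of appended goods works when, as you note, \(D\) learns \(r\) before allocating them. Your displayed inequality must hold for every deterministic rule, including one tailored to the family. Take the rule that does the following:
\begin{itemize}
\item it allocates the common prefix \(P\) by round robin in the order \(1,\dots,n\), which it can hard-code;
\item once the first appended good identifies \(r\), it allocates the remaining goods \(B\) of \(E_r\) by round robin in the reverse order \(n,\dots,1\).
\end{itemize}
In the first phase, agent \(i\) is envy-free toward every \(j>i\) and EF1 toward every \(j<i\), after removing \(j\)'s first pick. In the second phase the roles are swapped. Each comparison therefore needs at most one removal overall, and
\[
v_i(A_i\cup B_i)\ge v_i(A_j\cup B_j)-\max_{g\in A_j\cup B_j}v_i(g)\qquad(i\ne j).
\]
This rule has score \(1\) on \(P\) and on every \(E_r\), so the left side of your inequality equals \(1>\frac{n+1}{2n}\). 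Your fallback with a \(2\)-valued prefix good has the same flaw. Whenever the only horizon uncertainty comes before anything about \(r\) is revealed, the algorithm can choose a universally extendable prefix allocation and complete each extension offline.

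The missing idea is that the algorithm must face horizon uncertainty after it has learned the target. The paper uses \(2n\) inputs \(I_a^{\mathrm S}\) and \(I_a^{\mathrm L}\):
\begin{itemize}
\item They share a target-blind prefix of \(n-1\) unit goods. Unless these go to distinct agents, every short input fails. If they do go to distinct agents, exactly one agent \(b\) is left empty, and \(b\) is committed before \(a\) is revealed.
\item Next comes a target-specific good \(g_n\), worth \(K\) to \(a\) and \(1/K\) to the others. The short input stops here.
\item The long input adds \(g_{n+1}\), worth \(K\) to \(a\) and \(1\) to the others.
\end{itemize}
For \(a\ne b\), the short input succeeds only if \(g_n\to b\). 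In that case the long input has score at most \(1/K<\alpha\), wherever \(g_{n+1}\) goes. Only \(a=b\) allows both inputs to succeed. This gives at most \(n+1\) successes out of \(2n\), the same bound your weighting aimed for. The difference is that the stopping points where success is measured are target-specific, rather than a single common prefix.
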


\begin{proof}
Fix \(\mathcal A\) and \(\alpha\in(0,1]\), and choose a real
\(K>\frac{1}{\alpha}\ge1\), so that \(\frac{1}{K}<\alpha\).
Every input in our distribution begins with \(n-1\) goods
\(g_1,\ldots,g_{n-1}\), each worth one to every agent.
Before the run, choose a target agent \(a\) uniformly from \([n]\)
and independently choose a short or long horizon with equal
probability.

The next good \(g_n\) has value \(K\) to \(a\) and
\(\frac{1}{K}\) to every other agent. The short input is
$
I_a^{\mathrm S}=(g_1,\ldots,g_n).
$
The long input adds one more good \(g_{n+1}\), worth \(K\) to \(a\)
and one to every other agent, and is therefore
$
I_a^{\mathrm L}=(g_1,\ldots,g_{n+1}).
$
Thus, the distribution consists of \(2n\) equally likely fixed
inputs, and every good has strictly positive value to every agent.

Fix a deterministic rule \(D\). Its allocation of the common unit
prefix \(g_1,\ldots,g_{n-1}\) does not depend on either hidden
choice. For each target \(a\), its decision on \(g_n\) is also
identical on the short and long inputs, since those inputs are
indistinguishable until after \(g_n\) is allocated.
We distinguish two cases according to whether the first \(n-1\) goods
have distinct recipients.

\emph{Repeated recipients for the first \(n-1\) goods.}
Suppose two of the first \(n-1\) goods share a recipient.
Then at most \(n-2\) agents are occupied, so at least two are empty,
and some occupied bundle contains at least two unit goods.
After \(g_n\) is allocated, at least one agent \(e\) is still empty.
That occupied bundle still contains its two unit goods.
Deleting any one good leaves positive value to \(e\).
Since \(e\)'s own value is zero, the short-input score
is zero for every target. Only the \(n\) long inputs can possibly
have score at least \(\alpha\), so the success fraction is at most
\[
\frac{1}{2}\le\frac{n+1}{2n}.
\]

\emph{Distinct recipients for the first \(n-1\) goods.}
Otherwise, each unit good has a different recipient.
There is exactly one empty agent \(b\), and every other agent owns
one unit good. When \(a=b\), allow both horizons to succeed.
Now fix \(a\ne b\), so agent \(a\) initially has own value one.
There are two possibilities for the allocation of \(g_n\).

If \(g_n\) does not go to \(b\), then \(b\) remains empty, and the
recipient of \(g_n\) owns both its unit good and \(g_n\).
Both goods are strictly positive to \(b\), hence 
the short-input score is zero, so at most the long input can succeed.

If \(g_n\) goes to \(b\), allow the short input to succeed.
We claim that the long-input score is at most
\(\frac{1}{K}<\alpha\), regardless of how the algorithm allocates
\(g_{n+1}\).
If \(g_{n+1}\) also goes to \(b\), agent \(a\) keeps her unit good.
She values \(b\)'s two goods \(g_n\) and \(g_{n+1}\) at \(K\) and
\(K\), respectively. After deleting one of them, the EF1 denominator
remains \(K\), while her own utility is one. Hence the corresponding
EF1 factor is $\frac{1}{K}$.
If \(g_{n+1}\) goes to an agent \(j\ne b\), then \(j\) already owns
a unit good. Agent \(b\) is not the target and values her own good
\(g_n\) at \(\frac{1}{K}\), while she values both \(j\)'s unit good
and \(g_{n+1}\) at one. Her comparison with \(j\) therefore has
EF1 denominator one and factor $\frac{1}{K}$.
This argument also covers the case \(j=a\).
These two cases exhaust all possible recipients of \(g_{n+1}\).

Thus, for each of the \(n-1\) targets \(a\ne b\), at most one of
the corresponding short and long inputs succeeds. For \(a=b\), at
most two inputs succeed. There are therefore at most $2+(n-1)=n+1$
successful inputs. The empty agent \(b\) is determined by the common
unit prefix \(g_1,\ldots,g_{n-1}\), before the target is revealed.
Writing
\(
\mathcal I_K
=
\{I_a^{\mathrm S},I_a^{\mathrm L}:a\in[n]\}
\)
for the collection of \(2n\) inputs, we obtain
\[
\frac{1}{2n}
\sum_{I\in\mathcal I_K}
\ind\{\rho_{\mathrm{EF1}}(D,I)\ge\alpha\}
\le
\frac{n+1}{2n}.
\]
The first case satisfies the same bound.

Fix each random tape of \(\mathcal A\), apply this deterministic
inequality, and average over the random tapes. Since the input
distribution is independent of these tapes,
\[
\frac{1}{2n}
\sum_{I\in\mathcal I_K}
\Prb_{\mathcal A}
\bigl(\rho_{\mathrm{EF1}}(\mathcal A,I)\ge\alpha\bigr)
\le
\frac{n+1}{2n}.
\]
Consequently, at least one fixed input in \(\mathcal I_K\) has
success probability at most this average. This input is selected
without observing the realized random tape, which proves the
theorem against a non-adaptive adversary.
\end{proof}

\begin{corollary}[Expected EF1 upper bound]
\label[corollary]{cor:ef1-expected}
For every \(n\ge2\) and every randomized fully online algorithm
\(\mathcal A\),
\[
R_{\mathrm{EF1}}^{\mathcal A}(n)\le\frac{n+1}{2n}.
\]
\end{corollary}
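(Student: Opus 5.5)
The corollary follows from \cref{thm:ef1-success} by the upper half of the tail-integral relation in \cref{subsec:randomized-guarantees}, exactly as \cref{cor:efx-expected} follows from \cref{thm:efx-success}. Fix a randomized fully online algorithm \(\mathcal A\) and an arbitrary \(\alpha\in(0,1]\). \Cref{thm:ef1-success} supplies a fixed finite input \(I\), chosen independently of the random tape, on which \(\Prb_{\mathcal A}(\rho_{\mathrm{EF1}}(\mathcal A,I)\ge\alpha)\le\frac{n+1}{2n}\).

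Write \(\rho=\rho_{\mathrm{EF1}}(\mathcal A,I)\in[0,1]\). Split the expectation on the events \(\{\rho<\alpha\}\) and \(\{\rho\ge\alpha\}\), bounding \(\rho\) by \(\alpha\) on the first and by one on the second:
\[
 \E[\rho]\le\alpha\,\Prb(\rho<\alpha)+\Prb(\rho\ge\alpha)\le\alpha+\frac{n+1}{2n}.
\]
A slightly sharper form is \(\E[\rho]\le\alpha+(1-\alpha)\frac{n+1}{2n}\), but the cruder bound suffices. Since \(R_{\mathrm{EF1}}^{\mathcal A}(n)\) is an infimum over inputs, this gives \(R_{\mathrm{EF1}}^{\mathcal A}(n)\le\alpha+\frac{n+1}{2n}\) for every \(\alpha\in(0,1]\).

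Letting \(\alpha\to0^+\) yields \(R_{\mathrm{EF1}}^{\mathcal A}(n)\le\frac{n+1}{2n}\). The input \(I\) may change with \(\alpha\), but this is harmless because the infimum is taken over all inputs. There is no real obstacle here; the only care needed is that \cref{thm:ef1-success} is stated for every \(\alpha\in(0,1]\), so the limit \(\alpha\to0\) is legitimate.
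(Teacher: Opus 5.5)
Your proof is correct and follows the paper's argument: apply \cref{thm:ef1-success}, split the expectation at \(\alpha\), and let \(\alpha\to0\). The only difference is that the paper uses the sharper form \(\alpha+(1-\alpha)\frac{n+1}{2n}\) that you mention, which gives the same bound \(\frac{n+1}{2n}\) in the limit.
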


\begin{proof}
Fix \(\mathcal A\) and \(\alpha\in(0,1)\). On the fixed input
\(I_\alpha\) given by \cref{thm:ef1-success}, write
\(
\rho=\rho_{\mathrm{EF1}}(\mathcal A,I_\alpha).
\)
Since \(0\le\rho\le1\),
\begin{align*}
\E[\rho]
&\le
\alpha\Prb(\rho<\alpha)+\Prb(\rho\ge\alpha)\\
&=
\alpha+(1-\alpha)\Prb(\rho\ge\alpha)\\
&\le
\frac{n+1}{2n}+\frac{n-1}{2n}\alpha.
\end{align*}
Consequently \(R_{\mathrm{EF1}}^{\mathcal A}(n)\le\frac{n+1}{2n}+\frac{n-1}{2n}\alpha\).
Letting \(\alpha\rightarrow 0\) proves the corollary.
\end{proof}

%% file: sections/conclusion.tex
\section{Conclusion}

We study high-probability fairness and expected realized
fairness as two complementary criteria for randomized online allocation.
They evaluate the same realized fairness factor through its lower tail
and its expectation. For PROP1, \(\OSQ\) uses only the number of agents as
advance information and preserves exact ex-ante envy-freeness and
proportionality.

Against a non-adaptive adversary, \(\OSQ\) improves the
high-probability PROP1 guarantee of independent uniform allocation
\(\Rand\) by a factor of \(\Omega(\log n)\), uniformly over
\(\delta\in(0,1/2]\) at the same confidence level \(1-\delta\).
For every fixed \(c>0\), all sufficiently large \(n\), and every fixed
input with \(n\) agents, it achieves a
\(\frac{3-\sqrt5}{2}-o(1)\) factor with probability at least
\(1-n^{-c}\).
The expected realized PROP1 guarantee of \(\OSQ\) is at least
\(\frac{3-\sqrt5}{2}-o(1)\). We also establish the asymptotically sharp
expected realized guarantee \((1+o(1))/\log n\) for \(\Rand\).
Thus \(\OSQ\) improves on \(\Rand\) under this criterion by a factor
of at least \(\bigl(\frac{3-\sqrt5}{2}-o(1)\bigr)\log n\).

For EFX and EF1, we establish limitations that apply to every randomized
fully online algorithm. For every such algorithm, the success probability
of any positive EFX factor can be made arbitrarily small, and its
expected realized EFX guarantee is zero. For EF1, no positive factor
can be guaranteed with confidence exceeding \(\frac{n+1}{2n}\).
The expected realized EF1 guarantee of every such algorithm
is also at most this threshold.

Several gaps remain. For PROP1, we leave open whether a larger asymptotic
constant is attainable.
For EF1, the optimal high-probability and
expected realized guarantees are still unknown. It would also be natural to extend
these criteria to other classical fairness notions, such as the maximin
share (MMS).

\paragraph{AI disclosure.}
GPT-6 Astra (OpenAI) was used under the authors' direction to generate
and refine proofs presented in this work. The authors independently
checked all proofs for correctness and wrote the manuscript.
The authors take full responsibility for the content and conclusions of this work.